\documentclass[12pt,reqno]{amsart}

\newcommand\version{April 23, 2019}


\usepackage{amsmath,amsfonts,amsthm,amssymb,amsxtra}
\usepackage{bbm} 



\setlength{\voffset}{-.7truein}
\setlength{\textheight}{8.8truein}
\setlength{\textwidth}{6.05truein}
\setlength{\hoffset}{-.7truein}


\newtheorem{theorem}{Theorem}
\newtheorem{proposition}[theorem]{Proposition}
\newtheorem{lemma}[theorem]{Lemma}
\newtheorem{corollary}[theorem]{Corollary}

\theoremstyle{definition}

\theoremstyle{remark}




\renewcommand{\epsilon}{\varepsilon}

\newcommand{\N}{\mathbb{N}}

\renewcommand{\phi}{\varphi}
\newcommand{\R}{\mathbb{R}}

\newcommand{\Sph}{\mathbb{S}}

\DeclareMathOperator{\per}{Per}
\DeclareMathOperator{\ran}{ran}

\DeclareMathOperator{\spa}{span}


\begin{document}

\title[Non-spherical equilibrium shapes --- \version]{Non-spherical equilibrium shapes\\ in the liquid drop model}

\author{Rupert L. Frank}
\address[R. Frank]{Mathematisches Institut, Ludwig-Maximilans Universit\"at M\"unchen, The\-resienstr. 39, 80333 M\"unchen, Germany, and Department of Mathematics, California Institute of Technology, Pasadena, CA 91125, USA}
\email{r.frank@lmu.de}

\begin{abstract}
We prove the existence of a family of volume-constrained critical points of the liquid drop functional, which are cylindrically but not spherically symmetric. This family bifurcates from the ball and exchanges stability with it. We justify a formula of Bohr and Wheeler for the energy of these sets.
\end{abstract}

\renewcommand{\thefootnote}{${}$} \footnotetext{\copyright\, 2019 by the author. This paper may be reproduced, in its entirety, for non-commercial purposes.\\
The author thanks T. K\"onig for comments on an early version of the manuscript. Partial support through US National Science Foundation grant DMS-1363432 is acknowledged.}

\maketitle


\section{Introduction and main result}

\subsection{Introduction}

Gamow's liquid drop model \cite{Ga} is a classical model of a nucleus which, despite its simplicity, is believed to make qualitatively correct predictions. Recently, it has received a lot of interest in the mathematics literature, see, for instance, \cite{LuOt,KnMu,Ju,BoCr,FrLi,FrKiNa} as well as the review \cite{ChMuTo} and the references therein.

In the liquid drop model, nuclei are considered as arbitrary measurable sets $E\subset\R^3$ of positive and finite measure. The nucleon density is assumed to be constant and therefore the measure $|E|$ is interpreted, in suitable units, as the nucleon number. The corresponding energy, in dimensionless units, is given by the functional
$$
\mathcal I[E] := \per E + D[E] \,,
$$
where $\per E$ denotes the perimeter in the sense of geometric measure theory (equal to the surface area for sufficiently regular sets) and where
$$
D[E] := \frac12 \iint_{E\times E} \frac{dx\,dy}{|x-y|}
$$
denotes the Coulomb repulsion between the protons. The ground state energy at nucleon number $A$ (considered here as a continuous positive parameter) is
\begin{equation}
\label{eq:groundstateenergy}
\inf\left\{ \mathcal I[E] :\ |E|=A \right\}.
\end{equation}

It is widely believed, but not proved, that for $A\leq A_{c}= 5(2-2^{2/3})/(2^{2/3}-1)\approx 3.512$ the infimum in \eqref{eq:groundstateenergy} is attained precisely when $E$ is a ball and that for $A>A_c$ the infimum is not attained. The value of $A_{c}$ is determined by the equality of the energy of a single ball with that of two balls of equal radii which are infinitely far apart. This conjecture appears explicitly, for instance, in \cite{ChPe2}. What is rigorously known is that the infimum in \eqref{eq:groundstateenergy} is attained at balls when $A$ is small \cite{KnMu,Ju,BoCr} and that the infimum is not attained if $A>8$ \cite{FrKiNa}, see also \cite{LuOt,KnMu}.

In this paper we are concerned not with solutions of the minimization problem \eqref{eq:groundstateenergy}, but more generally with volume-constrained critical points of $\mathcal I$. For any $A>0$, balls of volume $A$ are volume-constrained critical points of $\mathcal I$. They are stable against local perturbations (in the sense of having a positive semi-definite second variation when restricted to variations with mean zero) if and only if $A\leq 10$. This is remarkable since $10>A_c$. This computation is well-known in the physics literature and appears, for instance, in \cite{BoCr}. Recently, it was shown in \cite{Ju1} that for any $M>0$ there is an $A_M>0$ such that balls are the only stable volume-constrained critical points $E$ of $\mathcal I$ with $|E|<A_M$ and $\per E\leq M |E|^{2/3}$.

In this paper we are concerned with volume-constrained critical points for non-small volumes $A$. Our main result is the existence of a smooth family of non-spherical volume-constrained critical points with volumes close to 10. This family bifurcates from the ball of volume $A=10$, where the ball loses stability. The sets that we construct are cylindrically symmetric and change from prolate (that is, football shaped) for volumes below $10$ to oblate (that is, pancake shaped) for volumes above $10$. The energy of these sets is above that for balls of the same volume for volumes below 10 and below it for volumes above 10. Moreover, at volume $10$ an exchange of stability takes place between balls and the new, non-spherical sets in the sense that the latter are stable for volumes above 10 and unstable for volumes below 10.

The sets whose existence we prove have been studied before by Bohr and Wheeler \cite[Section II]{BoWh}. They argue that these sets appear as an intermediate state when a ball decays into two balls. More precisely, they consider volume-preserving deformations of a ball with mass between $A_c$ and $10$. Since such balls are local, but not global minimizers, the energy along a deformation path first increases and then decreases. Of obvious physical interest is the difference between the maximal energy along this deformation path and the initial energy and, in particular, the infimum of this quantity with respect to all volume-preserving deformation paths. This min-max value, if it exists, should correspond to saddle-point solutions, namely the ones considered in this paper.

Our construction is different from the one proposed by Bohr and Wheeler. However, due to the uniqueness of our construction, if there are saddle-point solutions as suggested by Bohr and Wheeler which are close to a ball of volume 10, then these solutions necessarily coincide with our solutions. In this sense, our work rigorously justifies the Bohr--Wheeler formula for the leading order behavior of the deformation energy for volumes close to 10.

Our analysis is purely local around volume 10. Different arguments would be required to understand the global behavior of the bifurcation branch. In particular, the work of Bohr and Wheeler suggest that the branch continues to arbitrary small volumes and that the sets converge to two touching balls as the volume tends to zero. To prove this is an open problem. Note that the existence of the bifurcation branch with arbitrarily small volumes is consistent with the result from \cite{Ju} mentioned above, since the conjectured branch is believed to lose its stability at a certain volume; see \cite{BuGa} and also \cite[Figure 3]{Ni}.

In this paper we will deduce the existence of these sets from the bifurcation theorem of Crandall and Rabinowitz \cite{CrRa}, after having identified star-shaped, volume-constrained critical points of $\mathcal I$ with solutions of a certain quasi-linear partial differential equation on $\Sph^2$. Our construction bears some similarity with those in \cite{ReWe,CaFaWe1,CaFaWe2}, although these works deal with seemingly quite different problems.

We finally point out that there is a version of the liquid drop model for nuclear matter with a neutralizing background. A mathematically similar model appears in the theory of diblock polymers; see, e.g., \cite{AlChOt,ChPe,ChPe2,KnMuNo, EmFrKo,FrLi2} and references therein. It would be interesting to understand bifurcations from spherical, cylindrical and lamellar shapes in these models. The paper \cite{Fa} is a first step in this direction, but with Yukawa instead of Coulomb interaction.


\subsection{Main results}

We will consider sets of the form
$$
\Omega_\phi := \left\{ x\in\R^3 :\ |x|< \phi(x/|x|) \right\},
$$
where $\phi:\Sph^2\to\R$ is continuous and non-negative. The following lemma says that these sets are volume-constrained critical points of the functional $\mathcal I$ if and only if $\phi$ solves a certain quasi-linear PDE.

\begin{lemma}\label{equivalence}
Let $\phi:\Sph^2\to \R$ be positive and Lipschitz. Then $\phi$ satisfies
$$
\frac{d}{dt}\Big|_{t=0} \,\mathcal I[ |\Omega_{\phi}|^{1/3} |\Omega_{\phi+tu}|^{-1/3} \Omega_{\phi+tu}] = 0
$$
for every Lipschitz $u : \Sph^2\to\R$ if and only if it satisfies
\begin{equation}
\label{eq:nonlinear2}
-\nabla\cdot \frac{\nabla\phi}{\phi \sqrt{\phi^2 + (\nabla\phi)^2}} + \frac{3}{\sqrt{\phi^2+(\nabla\phi)^2}} - \frac{\sqrt{\phi^2+(\nabla\phi)^2}}{\phi^2} + \int_{\Omega_\phi} \frac{dy}{|\phi(\omega)\omega - y|}  = \mu
\quad\text{on}\ \Sph^2
\end{equation}
in the weak sense with
\begin{equation}
\label{eq:muvirial}
\mu = |\Omega_\phi|^{-1} \left( \frac23 \per\Omega_\phi + \frac53 D[\Omega_\phi] \right).
\end{equation}
\end{lemma}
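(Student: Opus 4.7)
The approach is to compute $\frac{d}{dt}\big|_{t=0}\mathcal I[\lambda(t)\Omega_{\phi+tu}]$ directly and read off the Euler--Lagrange equation. Setting $\lambda(t) := |\Omega_\phi|^{1/3}|\Omega_{\phi+tu}|^{-1/3}$, the scaling identities $\per(\lambda\Omega)=\lambda^2\per\Omega$ and $D[\lambda\Omega]=\lambda^5 D[\Omega]$ together with $|\Omega_\phi|=\frac13\int_{\Sph^2}\phi^3\,d\omega$ (whence $\lambda'(0) = -\frac13 |\Omega_\phi|^{-1}\int_{\Sph^2}\phi^2 u\,d\omega$) reduce the vanishing of the normalized derivative to the identity
\begin{equation*}
\frac{d}{dt}\Big|_{t=0}\bigl(\per\Omega_{\phi+tu} + D[\Omega_{\phi+tu}]\bigr) = \mu\int_{\Sph^2}\phi^2 u\,d\omega \qquad \text{for every Lipschitz } u,
\end{equation*}
with $\mu$ given by \eqref{eq:muvirial}. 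Since each step below is reversible, both directions of the iff statement fall out simultaneously from the explicit formulas for the two derivatives on the left.

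Those derivatives are straightforward. Using the polar parametrization $\per\Omega_\phi = \int_{\Sph^2}\phi\sqrt{\phi^2+|\nabla\phi|^2}\,d\omega$, differentiating the integrand, and integrating by parts on the closed manifold $\Sph^2$ (no boundary terms, valid distributionally since $\phi$ is Lipschitz) yields
\begin{equation*}
\frac{d}{dt}\Big|_{t=0}\per\Omega_{\phi+tu} = \int_{\Sph^2} u\left[ \sqrt{\phi^2+|\nabla\phi|^2} + \frac{\phi^2}{\sqrt{\phi^2+|\nabla\phi|^2}} - \nabla\cdot\frac{\phi\nabla\phi}{\sqrt{\phi^2+|\nabla\phi|^2}}\right]d\omega.
\end{equation*}
For the Coulomb term, I would write $D[\Omega_{\phi+tu}]$ as a four-fold integral with radial upper limits $\phi+tu$ and differentiate those limits; the symmetry $\omega_1 \leftrightarrow \omega_2$ of the kernel collapses the two boundary contributions into
\begin{equation*}
\frac{d}{dt}\Big|_{t=0}D[\Omega_{\phi+tu}] = \int_{\Sph^2} u(\omega)\,\phi(\omega)^2 \int_{\Omega_\phi}\frac{dy}{|\phi(\omega)\omega-y|}\,d\omega.
\end{equation*}

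Substituting both expressions into the reduced criticality condition, cancelling the positive factor $\phi^2$, and invoking the fundamental lemma of the calculus of variations produces the weak PDE
\begin{equation*}
\frac{\sqrt{\phi^2+|\nabla\phi|^2}}{\phi^2} + \frac{1}{\sqrt{\phi^2+|\nabla\phi|^2}} - \frac{1}{\phi^2}\nabla\cdot\frac{\phi\nabla\phi}{\sqrt{\phi^2+|\nabla\phi|^2}} + \int_{\Omega_\phi}\frac{dy}{|\phi(\omega)\omega - y|} = \mu.
\end{equation*}
To bring this to the form \eqref{eq:nonlinear2}, the last step is purely algebraic: with $S := \sqrt{\phi^2+|\nabla\phi|^2}$, the product rule gives $-\frac{1}{\phi^2}\nabla\cdot\frac{\phi\nabla\phi}{S} = -\nabla\cdot\frac{\nabla\phi}{\phi S} - \frac{2|\nabla\phi|^2}{\phi^2 S}$, and the scalar identity $\frac{S}{\phi^2} + \frac{1}{S} - \frac{2|\nabla\phi|^2}{\phi^2 S} = \frac{3}{S} - \frac{S}{\phi^2}$ (each side equals $(2\phi^2-|\nabla\phi|^2)/(\phi^2 S)$) converts the display into exactly \eqref{eq:nonlinear2}. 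There is no substantial obstacle here; the main care required is to interpret the integration by parts distributionally given that $\phi$ is only assumed Lipschitz, and otherwise to bookkeep the chain- and product-rule computations correctly.
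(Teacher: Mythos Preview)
Your proposal is correct and follows essentially the same route as the paper: use the scaling laws for $\per$ and $D$ to reduce to computing $\frac{d}{dt}|_{t=0}$ of volume, perimeter, and Coulomb energy separately, then read off the weak Euler--Lagrange equation. The paper arrives at the intermediate form $-\nabla\cdot\frac{\phi\nabla\phi}{S} + \frac{\phi^2}{S} + S + \phi^2\int_{\Omega_\phi}\frac{dy}{|\phi(\omega)\omega-y|} = \mu\phi^2$ and simply asserts this is ``easily seen to be equivalent'' to \eqref{eq:nonlinear2}; your product-rule identity $-\frac{1}{\phi^2}\nabla\cdot\frac{\phi\nabla\phi}{S} = -\nabla\cdot\frac{\nabla\phi}{\phi S} - \frac{2|\nabla\phi|^2}{\phi^2 S}$ together with the scalar check supplies exactly that omitted verification.
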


In \eqref{eq:nonlinear2}, $\nabla$ denotes the gradient on $\Sph^2$ and $\nabla\cdot$ the associated divergence.

By a simple computation we see that for any $R>0$, $\phi\equiv R$ solves \eqref{eq:nonlinear2} with $\mu = 2R^{-1} + \frac{4\pi}3 R^2$. We are looking for solutions which are small perturbations of these constant solutions.

For a positive function $\phi\in C^2(\Sph^2)$ we denote the function on the left side of \eqref{eq:nonlinear2} by $F(\phi)$, that is,
$$
F(\phi) = -\nabla\cdot \frac{\nabla\phi}{\phi\sqrt{\phi^2+(\nabla\phi)^2}} + \frac{3}{\sqrt{\phi^2+(\nabla\phi)^2}} - \frac{\sqrt{\phi^2+(\nabla\phi)^2}}{\phi^2} + \int_{\Omega_\phi} \frac{dy}{|\phi(\omega)\omega-y|} \,.
$$
Moreover, we define
\begin{equation}
\label{eq:defrstarp}
R_* = \left( \frac{30}{4\pi} \right)^{1/3}
\qquad\text{and}\qquad
P(\omega) = \frac{3\omega_3^2-1}{2} \,.
\end{equation}

Our main result reads as follows.

\begin{theorem}\label{main}
Let $0<\alpha<1$. Then there are $\epsilon>0$ and $C^\infty$ curves
$$
R:(-\epsilon,\epsilon)\to (0,\infty),\ s\mapsto R_s \,,\qquad
\chi: (-\epsilon,\epsilon)\to C^{2,\alpha}(\Sph^2),\ s\mapsto \chi_s \,,
$$
with the following properties:
\begin{enumerate}
\item[(a)] $R_s\to R_*$ and $s^{-1} \chi_s \to P$ in $C^{2,\alpha}(\Sph^2)$ as $s\to 0$.
\item[(b)] For all $s\in(-\epsilon,\epsilon)$, $\chi_s$ depends only on $|\omega_3|$ and satisfies
\begin{equation}
\label{eq:orthogonality}
\int_{\Sph^2} P(\omega)(\chi_s(\omega)-sP(\omega))\,d\omega = 0 \,.
\end{equation}
\item[(c)] For all $s\in(-\epsilon,\epsilon)$, $F(R_s+\chi_s)=F(R_s)$.
\end{enumerate}
Moreover, there is a neighborhood $U$ of $(R_*,0)$ in $(0,\infty)\times \{\chi\in C^{2,\alpha}(\Sph^2): \ \chi \ \text{depends}$ $\text{only on}\ |\omega_3|\}$ such that if $(R,\chi)\in U$ satisfies $F(R+\chi)=F(R)$, then either $\chi\equiv 0$ or $(R,\chi)=(R_s,\chi_s)$ for some $s\in(-\epsilon,\epsilon)$.
\end{theorem}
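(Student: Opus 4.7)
The plan is to apply the Crandall--Rabinowitz bifurcation theorem \cite{CrRa} to the equation $F(R+\chi) = F(R)$ in a H\"older setting that encodes the symmetry required in part (b). Let
\[
X^{k,\alpha} := \{ \chi \in C^{k,\alpha}(\Sph^2) : \chi(\omega) = \chi(|\omega_3|) \}
\]
denote the closed subspace of axisymmetric H\"older functions even in $\omega_3$, and consider the map
\[
H(R, \chi) := F(R+\chi) - F(R), \qquad H : (0,\infty) \times X^{2,\alpha} \to X^{0,\alpha},
\]
with $F(R) = 2R^{-1} + \tfrac{4\pi}{3}R^2$ interpreted as a constant function on $\Sph^2$. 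Because the coefficients of the quasilinear part of $F$ are smooth in $(\phi, \nabla\phi)$ as long as $\phi$ stays positive, and the Coulomb term is a smoothing integral operator on $\Sph^2$, $H$ is of class $C^\infty$ near $(R_*, 0)$; moreover $H(R,0) = 0$ for every $R$, so balls form the trivial branch. A nontrivial zero of $H$ is a positive $\phi = R+\chi$ with $F(\phi) \equiv F(R)$, which by Lemma \ref{equivalence} together with the virial identity \eqref{eq:muvirial} is exactly an axisymmetric critical point of $\mathcal I$.

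The central step is the spectral analysis of the linearization $L_R := D_\chi H(R,0) = DF(R)$. A direct expansion of the three differential/algebraic terms in $F$ at a constant profile gives the contribution $-R^{-2}(\Delta_{\Sph^2}\chi + 2\chi)$, while the Coulomb term contributes a rotation-invariant integral operator $K_R$ that accounts both for the motion of the base point $\phi(\omega)\omega$ and of the domain $\Omega_\phi$. Rotation invariance diagonalizes $L_R$ in spherical harmonics, and the axial-reflection symmetry built into $X^{k,\alpha}$ restricts attention to the Legendre modes $\{P_{2k}(\omega_3)\}_{k \ge 0}$. The homogeneities $F_{\mathrm{geom}}(\lambda\phi) = \lambda^{-1} F_{\mathrm{geom}}(\phi)$ and $F_{\mathrm{Coul}}(\lambda\phi) = \lambda^2 F_{\mathrm{Coul}}(\phi)$ force eigenvalues of the form
\[
\lambda_\ell(R) = \frac{\ell(\ell+1)-2}{R^2} + R\,\kappa_\ell, \qquad \ell \ge 0,
\]
for constants $\kappa_\ell$ computable via the Legendre expansion of $|x-y|^{-1}$. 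The classical stability calculation recorded in \cite{BoCr} then singles out $R = R_*$ as the unique point at which an eigenvalue vanishes on the even axisymmetric subspace: $\lambda_2(R_*) = 0$, while $\lambda_\ell(R_*) \ne 0$ for every other integer $\ell \ge 0$ (in particular, $\lambda_0(R_*) = F'(R_*) \ne 0$), and $\lambda_2'(R_*) \ne 0$.

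With this spectral information, the Crandall--Rabinowitz hypotheses follow immediately. The subspace $X^{2,\alpha}$ excludes the $\ell = 1$ translation modes automatically because they are odd in $\omega_3$, so $\ker L_{R_*}$ inside $X^{2,\alpha}$ is the one-dimensional span of $P = (3\omega_3^2-1)/2$. The operator $L_{R_*} : X^{2,\alpha} \to X^{0,\alpha}$ has principal part $-R_*^{-2}\Delta_{\Sph^2}$ plus a compact (in fact smoothing) remainder, so standard elliptic Fredholm theory yields index zero; formal $L^2(\Sph^2)$ self-adjointness (automatic from the diagonalization) identifies the cokernel with the same span of $P$, so $\ran L_{R_*}$ has codimension one. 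The transversality condition $\partial_R L_R|_{R = R_*}[P] \notin \ran L_{R_*}$ reduces to $\lambda_2'(R_*) \ne 0$, already in hand.

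The Crandall--Rabinowitz theorem then produces $C^\infty$ curves $s \mapsto R_s$ and $s \mapsto \chi_s$ through $(R_*, 0)$ with $H(R_s, \chi_s) = 0$, parametrized via the splitting $X^{2,\alpha} = \operatorname{span}\{P\} \oplus \{\chi : \int_{\Sph^2} P\chi\,d\omega = 0\}$, so that $\chi_s = s P + s \psi_s$ with $\psi_s \in X^{2,\alpha}$, $\int_{\Sph^2} P\psi_s\,d\omega = 0$, and $\psi_s \to 0$ in $C^{2,\alpha}$ as $s \to 0$. Claim (a) follows from $s^{-1}\chi_s = P + \psi_s \to P$; (b) holds because axial and reflection symmetry are built into $X^{2,\alpha}$ and $s\int_{\Sph^2} P\psi_s\,d\omega = \int_{\Sph^2} P(\chi_s - sP)\,d\omega = 0$ is exactly \eqref{eq:orthogonality}; (c) is the defining equation of the curve. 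The local uniqueness statement in $U$ is the uniqueness clause of the Crandall--Rabinowitz theorem applied inside the axisymmetric even class. The principal technical obstacle is the detailed analysis of the nonlocal Coulomb operator $K_R$: one must compute its spectrum precisely enough to confirm that $\lambda_2$ is the only eigenvalue vanishing on the relevant subspace at $R_*$, verify $\lambda_2'(R_*) \ne 0$ directly, and check that $K_R$ acts with enough regularity between the H\"older spaces to fit into the Fredholm/Crandall--Rabinowitz framework.
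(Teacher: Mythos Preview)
Your proposal is correct and follows essentially the same route as the paper: apply the Crandall--Rabinowitz theorem to $\Phi(R,\chi)=R^2(F(R+\chi)-F(R))$ (your $H$ differs only by the harmless factor $R^2$) restricted to the axisymmetric, $\omega_3$-even H\"older spaces, after verifying smoothness (Proposition~\ref{smooth}), computing the linearization and its eigenvalues on spherical harmonics (Proposition~\ref{linearization}), and checking the one-dimensional kernel, codimension-one range, and transversality $\partial_R L_R P\notin\ran L_{R_*}$ (Proposition~\ref{asscr}). Your identification of the complement of $\ker L_{R_*}$ via the $L^2$-orthogonality to $P$ and the resulting parametrization $\chi_s=sP+s\psi_s$ with $\int P\psi_s=0$ exactly reproduces the paper's choice and yields \eqref{eq:orthogonality}.
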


Because of Lemma \ref{equivalence}, item (c) in Theorem \ref{main} means that for all $s\in(-\epsilon,\epsilon)$, $\Omega_{R_s+\chi_s}$ is a volume-constrained critical point of $\mathcal I$.

Our remaining results concern properties of the sets $\Omega_{R_s+\chi_s}$. They rely on the following theorem which computes $R_s$ and $\chi_s$ to next order. We define
\begin{equation}
\label{eq:defq}
Q(\omega) = R_*^{-1} \frac{3^3}{17\cdot 35} (35\omega_3^4 - 30\omega_3^2 + 3) - R_*^{-1} \frac{2}{15} \,.
\end{equation}

\begin{theorem}\label{secondorder}
As $s\to 0$,
\begin{equation}
\label{eq:secondorderr}
R_s = R_* -\frac 17 s + \mathcal O(s^2)
\end{equation}
and, in $C^{2,\alpha}(\Sph^2)$ for any $\alpha<1$,
$$
\chi_s = s P + s^2 Q + \mathcal O(s^3) \,.
$$
\end{theorem}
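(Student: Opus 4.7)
The approach is to Taylor-expand the identity $F(R_s + \chi_s) = F(R_s)$ to order $s^2$, exploiting the $C^\infty$ dependence of $(R_s,\chi_s)$ on $s$ from Theorem~\ref{main}. Write
$$
R_s = R_* + s R_1 + \mathcal O(s^2), \qquad \chi_s = sP + s^2 \chi_2 + \mathcal O(s^3),
$$
where the normalization $s^{-1}\chi_s \to P$ is from part~(a) and $\int_{\Sph^2} P\,\chi_2\,d\omega = 0$ comes from \eqref{eq:orthogonality}. The goal is to identify $R_1 = -1/7$ and $\chi_2 = Q$.

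Let $L := F'(R_*)$ and $B := F''(R_*)$ denote the first and second Fr\'echet derivatives of $F$ at the constant function $R_*$. Since $F$ sends constants to constants, $L[c] = c f'(R_*)$ and $B(c,c) = c^2 f''(R_*)$ for the scalar function $f(R) := 2/R + (4\pi/3) R^2$. Substituting the expansions and using $LP=0$ (the eigenfunction equation underlying Theorem~\ref{main}), the $\mathcal O(s)$ contributions cancel automatically and the $\mathcal O(s^2)$ equation reads
$$
L \chi_2 + R_1\, L'[R_*] P + \tfrac{1}{2} B(P,P) = 0 ,
$$
where $L'[R_*] P := \tfrac{d}{dR}\big|_{R=R_*} F'(R) P$. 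Pairing in $L^2(\Sph^2)$ with $P$ and using self-adjointness of $L$ together with $LP=0$ yields the Lyapunov--Schmidt solvability condition
$$
R_1 \;=\; -\,\frac{\int_{\Sph^2} P\cdot B(P,P)\,d\omega}{2\int_{\Sph^2} P\cdot L'[R_*]P\,d\omega} ,
$$
which pins down $R_1$ as soon as $B(P,P)$ and $L'[R_*]P$ are computed.

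The substantial work is the concrete evaluation of these two quantities. The three pointwise terms in $F$ are handled by direct Taylor expansion of the curvature-type nonlinearity in $\phi$ and $\nabla\phi$, producing quadratic expressions in $\psi$ and $\nabla\psi$. The Coulomb term $\int_{\Omega_\phi}|\phi(\omega)\omega-y|^{-1}\,dy$ is expanded by simultaneously varying the domain and the base point $\phi(\omega)\omega$: the boundary variation produces a pointwise quadratic piece in $\psi$, and the remainder is a genuinely bilinear form $\iint K(\omega,\omega')\,\psi(\omega)\psi(\omega')\,d\omega\,d\omega'$ whose kernel $K$ is obtained from $|x-y|^{-1}$ restricted to the sphere of radius $R_*$, with the classical expansion $|R_*\omega - R_*\omega'|^{-1} = R_*^{-1}\sum_{\ell\ge 0} P_\ell(\omega\cdot\omega')$. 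Because $P$ is the zonal spherical harmonic of degree~$2$, the product $P \cdot P$ and the bilinear Coulomb piece applied to $(P,P)$ couple only to zonal harmonics of degrees $0$, $2$, and $4$, so $B(P,P)$ lives in those three sectors. The relevant integrals reduce to the Clebsch--Gordan-type products $\int_{-1}^{1} P_k(t) P_2(t)^2\, dt$ for $k \in \{0,2,4\}$, which are elementary; carrying out the algebra yields $R_1 = -1/7$.

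With $R_1$ known, $\chi_2$ is obtained by inverting $L$ on the subspace of cylindrically symmetric functions orthogonal to $P$. By construction of $R_1$ the degree-$2$ component of $-\tfrac{1}{2}B(P,P) - R_1\,L'[R_*]P$ vanishes, so $\chi_2$ lies in the span of the degree-$0$ and degree-$4$ zonal harmonics; inverting $L$ on each of these two sectors (whose eigenvalues are known explicitly from the classical second variation of $\mathcal I$ at a ball, $R_*$ being precisely the radius at which the degree-$2$ eigenvalue vanishes) reproduces the formula \eqref{eq:defq} for $Q$. The remainder bounds $\mathcal O(s^2)$ and $\mathcal O(s^3)$ are immediate from the $C^\infty$ smoothness of $s\mapsto(R_s,\chi_s)$. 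The principal technical obstacle throughout is the careful second-order expansion of the nonlocal Coulomb term, which mixes boundary and bulk contributions and must be reorganised as a sum of a local quadratic term plus a bilinear convolution-type piece; once this decomposition is in hand, projection onto zonal Legendre modes reduces everything to a finite, purely algebraic computation.
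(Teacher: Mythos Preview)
Your approach is essentially the same as the paper's: Taylor-expand the equilibrium equation to second order, obtain the equation $L_{R_*}\chi_2 + R_1\,\frac{d}{dR}\big|_{R=R_*}L_R P + \tfrac12 D^2_{\chi\chi}\Phi(R_*,0)[P,P]=0$, project onto $P$ to read off $R_1=-1/7$, and then invert $L_{R_*}$ on the degree-$0$ and degree-$4$ zonal sectors to recover $Q$. The paper packages the concrete computations you sketch (second-order expansion of the curvature and Coulomb pieces, evaluation of $\tfrac12 D^2_{\chi\chi}\Phi(R_*,0)[P,P]$ in Legendre modes) as separate propositions, but the logic and the arithmetic are identical.
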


Expansion \eqref{eq:secondorderr} implies that the bifurcation is transcritical (see, for instance, \cite{Ki}).

A first consequence of this theorem concerns the volume of the sets $\Omega_{R_s+\chi_s}$.

\begin{corollary}\label{volume}
As $s\to 0$,
\begin{align*}
|\Omega_{R_s+\chi_s}| = 10 - s \frac{30}{7} R_*^{-1} + \mathcal O(s^2) \,.
\end{align*}
\end{corollary}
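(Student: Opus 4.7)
The plan is to reduce the corollary to a direct computation using Theorem \ref{secondorder}. For any positive continuous $\phi$ on $\Sph^2$, passing to spherical coordinates gives the elementary identity
\begin{equation*}
|\Omega_\phi| = \frac{1}{3}\int_{\Sph^2} \phi(\omega)^3\, d\omega,
\end{equation*}
so the entire task is to expand the right-hand side for $\phi = R_s + \chi_s$ to first order in $s$.

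Substituting $R_s = R_* - s/7 + \mathcal O(s^2)$ and $\chi_s = sP + \mathcal O(s^2)$ from Theorem \ref{secondorder} (the $s^2 Q$ term from $\chi_s$ is absorbed into the remainder, since only the first-order expansion is needed) and expanding the cube, the uniform $C^{2,\alpha}$ control yields
\begin{equation*}
(R_s + \chi_s(\omega))^3 = R_*^3 + 3 R_*^2\, s\left(-\tfrac{1}{7} + P(\omega)\right) + \mathcal O(s^2)
\end{equation*}
uniformly on $\Sph^2$. The key arithmetic observation is $\int_{\Sph^2} P\, d\omega = 0$, which follows from $\int_{\Sph^2} \omega_3^2\, d\omega = 4\pi/3$. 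Thus only the constant $-1/7$ in the bracket contributes to the first-order term after integration, and dividing by $3$ produces
\begin{equation*}
|\Omega_{R_s+\chi_s}| = \frac{4\pi R_*^3}{3} - \frac{4\pi R_*^2}{7}\, s + \mathcal O(s^2).
\end{equation*}

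The last step is a bookkeeping simplification via $R_*^3 = 30/(4\pi)$ from \eqref{eq:defrstarp}: the leading term becomes $10$, while $\frac{4\pi R_*^2}{7} = \frac{4\pi R_*^3}{7 R_*} = \frac{30}{7 R_*}$ gives the linear coefficient $\tfrac{30}{7} R_*^{-1}$. There is no real obstacle: all of the analytic content is already encoded in Theorem \ref{secondorder}, and the calculation goes through precisely because $P$ is orthogonal to the constants, so the nonzero shift in $R_s$ at first order translates into a genuine volume change rather than being cancelled by the mean of $\chi_s$.
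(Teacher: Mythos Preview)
Your proof is correct and follows exactly the approach the paper indicates: use the volume formula \eqref{eq:volumestarshaped}, insert the expansion \eqref{eq:secondorderr} for $R_s$ together with $\chi_s = sP + \mathcal O(s^2)$, and invoke $\int_{\Sph^2} P\,d\omega = 0$. The paper omits these details and instead refers forward to \eqref{eq:expvol}, but the computation you carry out is precisely what is meant.
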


In particular, $|\Omega_{R_s+\chi_s}|$ takes values both above and below $|\Omega_{R_*}|=10$.

Theorem \ref{secondorder} also has consequences concerning stability properties. For $R$ close to $R_*$ the linearization of \eqref{eq:nonlinear2} around the constant solution $\phi\equiv R$, when restricted to functions depending only on $|\omega_3|$, has a unique eigenvalue close to zero and this eigenvalue is given by $4-(8\pi/15)R^3$. This will be shown in Proposition \ref{linearization} below. In particular, the eigenvalue is positive for $R<R_*$ and negative for $R>R_*$.

\begin{corollary}\label{eigenvalue}
For $s$ close to zero, the linearization of \eqref{eq:nonlinear2} around $R_s+\chi_s$, when restricted to functions depending only on $|\omega_3|$, has a unique eigenvalue $\lambda(s)$ close to zero. Moreover, as $s\to 0$,
$$
\lambda(s) = - \frac{12}{7} R_*^{-1} s + \mathcal O(s^2) \,.
$$
\end{corollary}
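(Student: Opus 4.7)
The plan is to view $\lambda(s)$ as the pullback of a smooth two-parameter eigenvalue function along the bifurcation curve $s\mapsto(R_s,\chi_s)$ and to compute $\lambda'(0)$ by the chain rule. For $R$ close to $R_*$ and $\chi\in C^{2,\alpha}(\Sph^2)$ depending only on $|\omega_3|$ and close to $0$, I would realize $F'(R+\chi)$ as a self-adjoint operator on $L^2(\Sph^2)$; self-adjointness follows from Lemma~\ref{equivalence}, since $F$ is the gradient of the volume-constrained energy and so $F'(R+\chi)$ is a Hessian. Combined with the simple isolated eigenvalue $0$ of $F'(R_*)$ on the cylindrically symmetric subspace with eigenfunction $P$ (Proposition~\ref{linearization}), Kato's analytic perturbation theory yields a smooth function $\Lambda(R,\chi)$ giving the unique eigenvalue of $F'(R+\chi)$ close to $0$ on this subspace, which already establishes uniqueness, and $\lambda(s)=\Lambda(R_s,\chi_s)$. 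Since $\Lambda(R_*,0)=4-(8\pi/15)R_*^3=0$, also $\lambda(0)=0$. Theorem~\ref{secondorder} provides $R'(0)=-1/7$ and $\chi'(0)=P$, so the chain rule gives
\[
\lambda'(0) \;=\; \partial_R\Lambda(R_*,0)\cdot\bigl(-\tfrac{1}{7}\bigr) \;+\; D_\chi\Lambda(R_*,0)[P],
\]
with $\partial_R\Lambda(R_*,0)=-(8\pi/5)R_*^2$ obtained by differentiating $\Lambda(R,0)=4-(8\pi/15)R^3$, and $D_\chi\Lambda(R_*,0)[P]=\langle P,F''(R_*)[P,P]\rangle/\|P\|_{L^2(\Sph^2)}^2$ by first-order perturbation theory.

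To compute $\langle P,F''(R_*)[P,P]\rangle$ I would expand the identity ``$F(R_s+\chi_s)$ is constant in $\omega$'' (Lemma~\ref{equivalence}) to second order in $s$. Writing $R_s+\chi_s=R_*+s\phi_1+s^2\phi_2+\mathcal O(s^3)$ with $\phi_1=-\tfrac{1}{7}+P$, the $\mathcal O(s^2)$ coefficient
\[
F'(R_*)\phi_2 + \tfrac{1}{2}F''(R_*)[\phi_1,\phi_1]
\]
must itself be constant in $\omega$. Testing against $P$, the self-adjointness of $F'(R_*)$ together with $F'(R_*)P=0$ kills the $\phi_2$ term and $\int_{\Sph^2}P\,d\omega=0$ kills the constant, yielding $\langle P,F''(R_*)[\phi_1,\phi_1]\rangle=0$. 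Since $F$ applied to any constant function on $\Sph^2$ is constant in $\omega$, so is $F''(R_*)[1,1]$, whence $\langle P,F''(R_*)[1,1]\rangle=0$; and applying the same first-order perturbation identity to the one-parameter family $R\mapsto F'(R)$ in the constant direction gives
\[
\langle P,F''(R_*)[1,P]\rangle \;=\; \partial_R\Lambda(R_*,0)\,\|P\|_{L^2}^2 \;=\; -\tfrac{8\pi}{5}R_*^2\,\|P\|_{L^2}^2.
\]
Expanding the bilinear form $F''(R_*)[-\tfrac{1}{7}+P,-\tfrac{1}{7}+P]$ in the relation $\langle P,F''(R_*)[\phi_1,\phi_1]\rangle=0$ then gives
\[
\langle P,F''(R_*)[P,P]\rangle \;=\; \tfrac{2}{7}\langle P,F''(R_*)[1,P]\rangle \;=\; -\tfrac{16\pi}{35}R_*^2\,\|P\|_{L^2}^2.
\]

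Substituting back,
\[
\lambda'(0) \;=\; \tfrac{8\pi}{35}R_*^2 \;-\; \tfrac{16\pi}{35}R_*^2 \;=\; -\tfrac{8\pi}{35}R_*^2 \;=\; -\tfrac{12}{7}R_*^{-1},
\]
using $R_*^3=30/(4\pi)$ from \eqref{eq:defrstarp}. The hardest step will be the functional-analytic foundation: one must realize $F'(R+\chi)$ as a genuine self-adjoint operator on $L^2(\Sph^2)$ with an appropriate $H^1$-type form domain, verify discreteness of its spectrum near $0$ on the cylindrically symmetric subspace (using compactness of the nonlocal Coulomb-type piece and ellipticity of the leading differential part), and justify that Kato's perturbation theory applies along the smooth curve $s\mapsto(R_s,\chi_s)$ to yield a single smooth simple eigenvalue. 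Once this framework is secured, the perturbation computations above are routine.
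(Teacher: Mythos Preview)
Your approach is correct and reaches the right conclusion, but it differs from the paper's route. The paper differentiates the eigenvalue equation $D_\chi\Phi(R_s,\chi_s)[P+w(s)]=\lambda(s)(P+w(s))$ directly in $s$, tests against $P$, and then plugs in the \emph{explicit} second-derivative formula from Proposition~\ref{secondderivative} together with \eqref{eq:LRPderivative} and \eqref{eq:secondorderr}. You instead use a two-parameter chain rule and recover the needed projection $\langle P,\,\text{(second derivative)}[P,P]\rangle$ indirectly from the bifurcation identity (your expansion of ``$F(R_s+\chi_s)$ constant'' to second order is precisely equation~\eqref{eq:equationsecondorder} tested against $P$, used in reverse: assuming $R'(0)=-1/7$ from Theorem~\ref{secondorder}, you solve for the projection). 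This neatly bypasses the computation of Proposition~\ref{secondderivative}, at the cost of relying on Theorem~\ref{secondorder}; the paper's direct route is self-contained but needs that proposition.

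One point you should tighten: there is a silent mismatch between the operator whose eigenvalue you track and the formula you quote. You define $\Lambda(R,\chi)$ as the small eigenvalue of $F'(R+\chi)$, but the formula $\Lambda(R,0)=4-(8\pi/15)R^3$ you differentiate is the eigenvalue of $D_\chi\Phi(R,0)=R^2F'(R)$ (this is also what the paper means by ``the linearization of \eqref{eq:nonlinear2}'', cf.\ the proof of Corollary~\ref{eigenvalue}); the eigenvalue of $F'(R)$ itself is $4R^{-2}-(8\pi/15)R$. Because your final relation collapses to $\lambda'(0)=\tfrac{1}{7}\,\partial_R\Lambda(R_*,0)$, which is linear in $\partial_R\Lambda$, the extra factor $R_*^2$ passes through harmlessly and you land on the correct $-(8\pi/35)R_*^2=-\tfrac{12}{7}R_*^{-1}$; but the exposition would be cleaner if you consistently worked with $D_\chi\Phi(R,\chi)$ (so that your ``$F''$'' is $D_{\chi\chi}^2\Phi(R_*,0)$ and the identity $\langle P,D_{\chi\chi}^2\Phi(R_*,0)[1,P]\rangle=\partial_R\Lambda(R_*,0)\|P\|^2$ holds on the nose, since at $R_*$ the extra $2R\tilde\Lambda$ term vanishes).
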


In particular, the eigenvalue $\lambda(s)$ is negative for $s>0$ (that is, $|\Omega_{R_s+\chi_s}|<10$) and positive for $s<0$ (that is, $|\Omega_{R_s+\chi_s}|>10$). This means that an exchange of stability occurs at the bifurcation point.

Finally, we compare the energy of $\Omega_{R_s+\chi_s}$ with that of a ball with the same volume.

\begin{theorem}\label{energy}
If $\rho_s>0$ is defined by
$$
|\Omega_{R_s+\chi_s}| = |B_{\rho_s}| \,,
$$
then, as $s\to 0$,
\begin{equation}
\label{eq:bohrwheeler}
\mathcal I[\Omega_{R_s+\chi_s}] = \mathcal I[B_{\rho_s}] + \frac{8\pi}{35} R_*^{-1} s^3 + \mathcal O(s^4) \,.
\end{equation}
\end{theorem}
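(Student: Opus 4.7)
The plan is to combine the critical-point structure provided by Lemma \ref{equivalence} with the explicit expansions from Theorem \ref{secondorder}, reducing the problem to a one-dimensional Taylor computation. Write $\phi_s := R_s + \chi_s$, $g(s) := \mathcal I[\Omega_{\phi_s}]$, $\tilde g(s) := \mathcal I[B_{\rho_s}]$, and $V(s) := |\Omega_{\phi_s}| = |B_{\rho_s}|$. By Theorem \ref{main}(c), $F(\phi_s) = F(R_s) = 2/R_s + 4\pi R_s^2/3$ is a constant function, so $\phi_s$ is a volume-constrained critical point of $\mathcal I$ with this Lagrange multiplier $\mu_s$; applying Lemma \ref{equivalence} to the variation $\partial_s\phi_s$ then yields $g'(s) = \mu_s V'(s)$. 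The analogous identity for the ball family, $\tilde g'(s) = \tilde\mu_s V'(s)$ with $\tilde\mu_s = 2/\rho_s + 4\pi\rho_s^2/3$, follows from the chain rule. Since both Lagrange multipliers are values of the same scalar function $f(r) := 2/r + 4\pi r^2/3$, writing $h := g - \tilde g$ gives
\[ h'(s) = \bigl(f(R_s) - f(\rho_s)\bigr)\,V'(s),\qquad h(0) = 0. \]

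From Theorem \ref{secondorder}, $R_s = R_* - s/7 + \mathcal O(s^2)$, and from Corollary \ref{volume} combined with $\rho_s = (3V(s)/(4\pi))^{1/3}$, $\rho_s = R_* - s/7 + \mathcal O(s^2)$ as well. The linear coefficients agree, so $f(R_s) - f(\rho_s) = f'(R_*)(R_s - \rho_s) + \mathcal O(s^3) = \mathcal O(s^2)$, whence $h(s) = \mathcal O(s^3)$.

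To pin down the leading coefficient, I write $R_s = R_* - s/7 + a_2 s^2 + \mathcal O(s^3)$ (the value of $a_2$ is not supplied by Theorem \ref{secondorder}) and expand $V(s) = \tfrac{1}{3}\int_{\Sph^2}(R_s + \chi_s)^3\,d\omega$ to order $s^2$ using the elementary identities
\[ \int_{\Sph^2} P\,d\omega = 0,\qquad \int_{\Sph^2} P^2\,d\omega = \frac{4\pi}{5},\qquad \int_{\Sph^2} Q\,d\omega = -\frac{8\pi}{15 R_*}, \]
the last because $35\omega_3^4 - 30\omega_3^2 + 3$ has vanishing mean on $\Sph^2$ (being proportional to the degree-$4$ Legendre polynomial). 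Inverting the cube root $\rho_s = (3V(s)/(4\pi))^{1/3}$ with $4\pi R_*^3 = 30$ then gives $\rho_s = R_* - s/7 + (a_2 + 1/(15R_*))s^2 + \mathcal O(s^3)$, so the unknown $a_2$ cancels and $R_s - \rho_s = -s^2/(15R_*) + \mathcal O(s^3)$.

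Combining this with $f'(R_*) = -2/R_*^2 + 8\pi R_*/3 = 18/R_*^2$ (using $4\pi R_*^3/3 = 10$) and $V'(0) = -30/(7R_*)$,
\[ h'(s) = \frac{18}{R_*^2}\cdot\!\left(-\frac{1}{15 R_*}\right)\!\cdot\!\left(-\frac{30}{7 R_*}\right) s^2 + \mathcal O(s^3) = \frac{36}{7R_*^4}\,s^2 + \mathcal O(s^3), \]
and integrating yields $h(s) = \frac{12}{7R_*^4} s^3 + \mathcal O(s^4) = \frac{8\pi}{35 R_*} s^3 + \mathcal O(s^4)$, where the last equality uses $R_*^3 = 15/(2\pi)$. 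The main technical task is the careful bookkeeping in the $s^2$ expansion of $V(s)$: one must correctly account for the quadratic cross-term produced when cubing $\phi_s$, and verify that the contribution of $a_2$ to $\rho_s$ exactly matches its contribution to $R_s$, so that the final coefficient depends only on the explicit data from Theorem \ref{secondorder} and standard integrals over $\Sph^2$.
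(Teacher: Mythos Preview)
Your proof is correct, and it follows a genuinely different route from the paper's. The paper proceeds by brute-force expansion: it proves two separate lemmas giving the perimeter and the Coulomb energy of $\Omega_{R+tu}$ to third order in $t$, substitutes $\chi_s = sP + s^2 Q + s^3 T + \mathcal O(s^4)$, and then replaces $R_s$ by $\rho_s$ via the volume relation; a pleasant feature of that computation is that the unknown integrals $\int Q$ and $\int T$ cancel completely, so only the explicit value $R_s' = -1/7$ is needed. Your approach instead exploits the critical-point structure: from the first-variation identity underlying Lemma~\ref{equivalence} you extract $g'(s)=f(R_s)V'(s)$ and $\tilde g'(s)=f(\rho_s)V'(s)$, reducing the problem to the single quantity $R_s-\rho_s$ at order $s^2$. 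This is considerably shorter and bypasses the third-order perimeter and Coulomb expansions entirely, at the cost of using the explicit value $\int_{\Sph^2}Q\,d\omega = -8\pi/(15R_*)$ from Theorem~\ref{secondorder}, which the paper's argument does not need. One small point of phrasing: what you invoke is not Lemma~\ref{equivalence} as stated (which concerns the volume-constrained variation) but rather the unconstrained first-variation formula extracted from its proof; this is harmless since $F(\phi_s)$ is constant, but worth saying precisely.
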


In particular, the energy of $\Omega_{R_s+\chi_s}$ is above that of the ball of the same volume for $s>0$ (that is, $|\Omega_{R_x+\chi_s}|<10$) and below it for $s<0$ (that is, $|\Omega_{R_x+\chi_s}|>10$).

We claim that formula \eqref{eq:bohrwheeler} coincides with the leading order term in the Bohr--Wheeler formula \cite[(24)]{BoWh}. Their formula is stated in terms of
$$
x= \frac{4\pi}{30} \rho_s^3
\qquad\text{and}\qquad
f(x) = \frac{\mathcal I[\Omega_{R_s+\chi_s}] - \mathcal I[B_{\rho_s}]}{\per B_{\rho_s}}
$$
and reads, to leading order,
$$
f(x) \sim \frac{98}{135} (1-x)^3
\qquad\text{as}\ x\to 1 \,;
$$
see also \cite{FrMe,Sw} for more explanations. According to Corollary \ref{volume}, we have $1-x \sim (3/7)R_*^{-1}s$ and therefore
$$
\frac{98}{135}(1-x)^3\per B_{\rho_s} \sim \frac{98}{135} \frac{3^3}{7^3} R_*^{-3} (4\pi) R_*^2 s^3 = \frac{8\pi}{35} R_*^{-1} s^3 \,,
$$
which gives the equivalence of our and their formula. Finally, we note in passing that our formula for $Q$ in Theorem \ref{secondorder} does not coincide with the corresponding formula \cite[(23)]{BoWh}. However, there is a misprint in the latter formula, as observed in \cite{PrKn2}. Our formula for $Q$ coincides with what is obtained from \cite{PrKn1}, see also \cite{Sw}.


\subsection{Ingredients in the proof}

Let us explain the strategy of the proofs of the results mentioned in the previous subsection. We will defer the proofs of various technical assertions to the following sections. Let us fix $0<\alpha<1$ and consider
$$
\mathcal O = \left\{ (R,\chi)\in (0,\infty)\times C^{2,\alpha}(\Sph^2) :\ \inf_{\Sph^2}\chi > -R \right\},
$$
which is open in $(0,\infty)\times C^{2,\alpha}(\Sph^2)$. For $(R,\chi)\in\mathcal O$ we define
$$
\Phi(R,\chi) := R^2 \left( F(R+\chi)-F(R) \right).
$$

\begin{proposition}\label{smooth}
The map $\Phi:\mathcal O\to C^{0,\alpha}(\Sph^2)$ is $C^\infty$.
\end{proposition}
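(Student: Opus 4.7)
My plan is to decompose $F$ into the local (pointwise nonlinear differential) part
\[
L(\phi) := -\nabla\cdot \frac{\nabla\phi}{\phi\sqrt{\phi^2+(\nabla\phi)^2}} + \frac{3}{\sqrt{\phi^2+(\nabla\phi)^2}} - \frac{\sqrt{\phi^2+(\nabla\phi)^2}}{\phi^2}
\]
and the nonlocal Coulomb part $G(\phi)(\omega) := \int_{\Omega_\phi} dy/|\phi(\omega)\omega - y|$, to show that each defines a $C^\infty$ map from $\mathcal O$ to $C^{0,\alpha}(\Sph^2)$, and then to conclude that $\Phi(R,\chi) = R^2(L(R+\chi) + G(R+\chi) - L(R) - G(R))$ is also $C^\infty$.

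For $L$, expanding the divergence writes $L(\phi)$ pointwise as $N(\phi,\nabla\phi,\nabla^2\phi)$ for an explicit $C^\infty$ (in fact real-analytic) function $N$ defined on the open set $\{\phi>0\}$. Since $\phi\in C^{2,\alpha}(\Sph^2)$ yields $(\phi,\nabla\phi,\nabla^2\phi)\in C^{2,\alpha}\times C^{1,\alpha}\times C^{0,\alpha}$, and since $C^{0,\alpha}(\Sph^2)$ is a Banach algebra in which positive elements are smoothly invertible, the induced Nemytskii operator $\phi\mapsto N(\phi,\nabla\phi,\nabla^2\phi)$ is $C^\infty$ as a map $C^{2,\alpha}\to C^{0,\alpha}$ by a standard chain-rule argument.

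For $G$, I would pull the integration back to the fixed unit ball via the diffeomorphism $(r,\eta)\mapsto y = r\phi(\eta)\eta$, giving
\[
G(\phi)(\omega) = \int_{\Sph^2}\int_0^1 \frac{\phi(\eta)^3\, r^2}{|\phi(\omega)\omega - r\phi(\eta)\eta|}\,dr\,d\eta.
\]
The integrand is jointly smooth in $(\omega,\eta,r)$ and in $\phi$ away from $\{\eta=\omega,\, r=1\}$, where it has an integrable singularity of size $((1-r)^2+|\eta-\omega|^2)^{-1/2}$. Performing the radial integral in closed form (the substitution $u = r\phi(\eta) - \phi(\omega)(\omega\cdot\eta)$ produces elementary antiderivatives of square-root and logarithmic type) reduces $G(\phi)$ to a surface integral over $\Sph^2$ whose kernel is smooth in $(\omega,\eta,\phi(\omega),\phi(\eta))$ away from $\eta=\omega$ and has at worst a logarithmic singularity at $\eta=\omega$.

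The main obstacle is verifying $C^\infty$ Fr\'echet differentiability of this surface integral operator as a map into $C^{0,\alpha}(\Sph^2)$. I would handle it by a smooth cut-off splitting: outside a geodesic neighborhood of $\eta=\omega$ the kernel is smooth and smoothness of the contribution follows by differentiation under the integral combined with the Nemytskii argument above; near $\eta=\omega$, local tangential coordinates together with standard H\"older estimates for Newtonian volume potentials on $C^{1,\alpha}$ domains give the required $C^{0,\alpha}$ bounds uniformly in bounded $C^{2,\alpha}$-neighborhoods of the reference sphere, as well as for every Fr\'echet derivative in $\phi$.
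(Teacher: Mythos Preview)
Your decomposition $F=L+G$ and the Nemytskii-operator argument for the local part $L$ coincide with the paper's treatment. For the Coulomb term $G$, however, the paper proceeds differently: instead of integrating radially, it applies the divergence theorem (Lemma~\ref{potsurf}) to rewrite $G(\phi)(\omega)$ directly as a \emph{surface} integral over $\Sph^2$ with kernel $|\phi(\omega)\omega-\phi(\omega')\omega'|^{-1}$, and then recognizes the resulting expression as exactly the formula \cite[(4.17)]{CaFaWe2} with the exponent $N+\alpha$ replaced by $1$. Smoothness from $\{\phi\in C^{1,\alpha}:\inf\phi>0\}$ to $C^{0,\alpha}$ then follows by tracking the proof of \cite[Theorem~4.11]{CaFaWe2} and noting that the milder singularity removes the loss of $\alpha$ derivatives in \cite[(4.47)]{CaFaWe2}. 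Your radial-integration route to a logarithmic surface kernel is also workable, but the last paragraph---uniform $C^{0,\alpha}$ control of all Fr\'echet derivatives near the diagonal---is exactly where the serious analysis sits, and you have only gestured at it (each $\phi$-derivative worsens the kernel singularity, and one must exploit the cancellations in the numerators). The paper's approach buys you a concrete citation that already contains these estimates; yours is more self-contained in principle but would require writing out the near-diagonal H\"older bounds in detail.
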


The following lemma computes the first derivative of $\Phi$ with respect to $\chi$ at $(R,0)$.

\begin{proposition}\label{linearization}
For any $R>0$,
$$
D_\chi\Phi(R,0)[u](\omega) = -\Delta u(\omega) - 2u(\omega) + R^3 \left( \int_{\Sph^2} \frac{u(\omega')}{|\omega-\omega'|}\,d\omega' - \frac{4\pi}{3} u(\omega) \right).
$$
This operator commutes with rotations and its eigenvalue on the space of spherical harmonics of degree $\ell\in\N_0$ is
$$
\ell(\ell+1) - 2 - \frac{4\pi}{3} R^3 \left( 1 - \frac{3}{2\ell+1} \right).
$$
\end{proposition}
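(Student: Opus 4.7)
The plan is to compute $D_\chi\Phi(R,0)[u] = R^2 \frac{d}{dt}\big|_{t=0} F(R+tu)$ by linearizing each of the four terms of $F$ separately at the constant profile $\phi\equiv R$, and then to diagonalize the resulting operator using standard spherical harmonics identities. Proposition \ref{smooth} already supplies Fr\'echet differentiability, so it suffices to compute the G\^ateaux derivative. Writing $\phi_t = R + tu$, a key simplification is that $\nabla\phi_t = t\nabla u$ and $\sqrt{\phi_t^2 + |\nabla\phi_t|^2} = (R+tu) + O(t^2)$: the $|\nabla u|^2$ correction is quadratic in $t$ and drops out of the linearization.

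The three ``geometric'' terms linearize quickly. The first term has numerator $t\nabla u$ and denominator $\phi_t\sqrt{\phi_t^2 + |\nabla\phi_t|^2} = R^2 + O(t)$, so its $t$-derivative at $0$ equals $-R^{-2}\Delta u$. The second and third terms collapse, to first order, to $3/(R+tu) - 1/(R+tu)$, whose derivative at $0$ is $-2u/R^2$. Multiplying by $R^2$ and summing produces the geometric contribution $-\Delta u - 2u$.

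The Coulomb term requires more care because both the integration domain and the evaluation point move with $t$. Writing $G(\phi_t)(\omega) = H_t(\phi_t(\omega))$ with $H_t(\rho) := \int_{\Omega_{\phi_t}} dy/|\rho\omega - y|$, the derivative splits into two pieces. The \emph{evaluation-point} piece is $u(\omega)\,\omega\cdot\nabla V_R(R\omega)$, where $V_R(x) := \int_{B_R} dy/|x-y|$; using the classical formula $V_R(x) = 2\pi(R^2 - |x|^2/3)$ for $|x|\le R$ (and $V_R \in C^1(\R^3)$), this evaluates to $-(4\pi R/3)\,u(\omega)$. The \emph{domain-variation} piece, obtained by writing the integral as $\int_{\Sph^2}d\omega'\int_0^{\phi_t(\omega')} r^2\,dr/|R\omega - r\omega'|$ and differentiating the upper limit, equals $R\int_{\Sph^2} u(\omega')/|\omega-\omega'|\,d\omega'$. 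Multiplying by $R^2$ and combining with the geometric piece yields the claimed formula.

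For the eigenvalue statement, rotation invariance is immediate since $-\Delta$, the Coulomb convolution, and multiplication by a constant all commute with $SO(3)$. On the spherical harmonics of degree $\ell$, $-\Delta$ has eigenvalue $\ell(\ell+1)$, and I would obtain the Coulomb multiplier from the multipole expansion $1/|x-y| = \sum_\ell (r_<^\ell/r_>^{\ell+1}) P_\ell(\cos\gamma)$ specialized to $|x|=|y|=1$ together with the addition formula $\int_{\Sph^2} P_\ell(\omega\cdot\omega') Y_\ell(\omega')\,d\omega' = (4\pi/(2\ell+1))\, Y_\ell(\omega)$, giving multiplier $4\pi/(2\ell+1)$. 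Combining the pieces yields $\ell(\ell+1) - 2 - (4\pi R^3/3)(1 - 3/(2\ell+1))$. The main technical care needed is in justifying differentiation under the integral sign for the Coulomb term at a moving boundary point, but this is mild: $V_R$ is $C^{1,\alpha}$ by standard Newtonian-potential regularity for bounded densities, and $1/|\omega-\omega'|$ is integrable on $\Sph^2$, so nothing genuinely obstructs the computation. A reassuring sanity check is that the $\ell=1$ eigenvalue is zero, reflecting translation invariance of $\mathcal I$.
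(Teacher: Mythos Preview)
Your proof is correct and follows essentially the same route as the paper. For the derivative formula, the paper defers to the second-order expansions in Lemmas~\ref{expper2} and~\ref{expcou2}; your direct first-order linearization of the geometric terms matches Lemma~\ref{expper2}, while your chain-rule split of the Coulomb term into an evaluation-point piece (handled via the explicit Newtonian potential $V_R(x)=2\pi(R^2-|x|^2/3)$) and a domain-variation piece is a slightly cleaner organization than the paper's scaling substitution $s=r/\phi(\omega)$ in Lemma~\ref{expcou2}, though both yield the same terms. For the eigenvalue, your multipole expansion at $|x|=|y|=1$ is formally the same object as the paper's Legendre generating function with $a=1-\epsilon$; the paper makes the limiting argument explicit via dominated convergence, which is exactly the justification your ``specialized to $|x|=|y|=1$'' step needs, since the series $\sum_\ell P_\ell(\cos\gamma)$ does not converge pointwise on the diagonal.
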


In the following we work with the subspaces
$$
X = \left\{ \chi\in C^{2,\alpha}(\Sph^2):\ \chi \ \text{depends only on}\ |\omega_3| \right\}
$$
and
$$
Y = \left\{ \chi\in C^{0,\alpha}(\Sph^2):\ \chi \ \text{depends only on}\ |\omega_3| \right\}.
$$
It is easy to see that $\Phi(R,\chi)\in Y$ if $\chi\in X$. We denote by
$$
L_R = D_\chi\Phi(R,0)\Big|_X
$$
the restriction of $D_\chi\Phi(R,0)$ to $X$.

The next lemma clarifies the roles of $R_*$ and $P$ from \eqref{eq:defrstarp}.

\begin{proposition}\label{asscr}
We have
$$
\ker L_{R_*} = \spa \{P\}
\qquad\text{and}\qquad
\ran L_{R_*} = \left\{ \chi\in Y:\ \int_{\Sph^2} P(\omega)\chi(\omega)\,d\omega = 0 \right\}
$$
and
\begin{equation}
\label{eq:LRPderivative}
\frac{d}{dR}\Big|_{R=R_*} L_R P = - 12 R_*^{-1} P \not\in\ran L_{R_*} \,.
\end{equation}
\end{proposition}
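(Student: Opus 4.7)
The plan is to derive all three statements directly from the eigenvalue formula in Proposition \ref{linearization}. Any $\chi\in X$ depends only on $|\omega_3|$, hence is axially symmetric and even in $\omega_3$, and so expands in Legendre polynomials $P_\ell(\omega_3)$ with $\ell$ even. These diagonalize $L_R$ with eigenvalues
$$
e_\ell(R) = \ell(\ell+1) - 2 - \frac{4\pi}{3} R^3 \left(1 - \frac{3}{2\ell+1}\right).
$$
First I would substitute $R=R_*$, using $\frac{4\pi}{3}R_*^3 = 10$, to get $e_0(R_*)=18$, $e_2(R_*)=0$, and $e_\ell(R_*)\geq \ell(\ell+1)-12>0$ for even $\ell\geq 4$. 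Since $P=P_2(\omega_3)$, this yields $\ker L_{R_*} = \spa\{P\}$.

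Next I would establish the range characterization by a Fredholm argument. The operator $-\Delta+1: X\to Y$ is an isomorphism (its eigenvalues $\ell(\ell+1)+1$ are bounded away from zero, and it is elliptic on $\Sph^2$). Writing $L_{R_*} = (-\Delta+1) + T$, where $T$ collects the zero-order shift and the integral operator $u\mapsto R_*^3\int_{\Sph^2} u(\omega')/|\omega-\omega'|\,d\omega' - \frac{4\pi}{3} R_*^3 u$, the compactness of the embedding $C^{2,\alpha}\hookrightarrow C^{0,\alpha}$ combined with the standard Hölder mapping properties of the weakly singular kernel $|\omega-\omega'|^{-1}$ makes $T:X\to Y$ compact. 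Hence $L_{R_*}$ is Fredholm of index zero, so $\codim \ran L_{R_*} = \dim\ker L_{R_*} = 1$. The eigenvalue expansion shows $L_{R_*}$ is symmetric in $L^2$, so for $u\in X$,
$$
\int_{\Sph^2} P\, L_{R_*} u\,d\omega = \int_{\Sph^2} (L_{R_*}P)\, u\,d\omega = 0,
$$
giving $\ran L_{R_*} \subseteq \{\chi\in Y:\int_{\Sph^2} P\chi\,d\omega=0\}$. Since this codimension-one subspace of $Y$ contains $\ran L_{R_*}$, equality follows.

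Finally, for the transversality claim \eqref{eq:LRPderivative}, since $L_R P = e_2(R) P$ identically in $R$, I would differentiate to get $\frac{d}{dR} L_R P = e_2'(R) P$, where $e_2'(R) = -\frac{8\pi}{5}R^2$. Using $R_*^3 = 30/(4\pi)$ gives $e_2'(R_*)\,R_* = -\frac{8\pi}{5} R_*^3 = -12$, so $\frac{d}{dR}\big|_{R=R_*} L_R P = -12 R_*^{-1} P$. This does not lie in $\ran L_{R_*}$ because $\int_{\Sph^2} P^2\,d\omega > 0$.

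The only nontrivial step is verifying the Fredholm/self-adjointness structure of $L_{R_*}$ on the Hölder pair $(X,Y)$; all the other claims reduce to checking finitely many eigenvalues and differentiating a single scalar function of $R$.
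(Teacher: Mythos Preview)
Your proof is correct. The computations for $\ker L_{R_*}$ and for \eqref{eq:LRPderivative} are essentially identical to the paper's (both simply evaluate the eigenvalue formula from Proposition~\ref{linearization} and differentiate in~$R$). The genuine difference is in the range characterization. The paper argues via $L^2$ and elliptic regularity: given $\chi\in Y$ orthogonal to $P$, the explicit spectrum shows that $L_{R_*}$ maps the $|\omega_3|$-symmetric part of $H^2(\Sph^2)$ onto the corresponding part of $L^2(\Sph^2)$, so $L_{R_*}u=\chi$ has a solution $u\in H^2$; one then bootstraps (Morrey, mapping properties of the convolution, Schauder) to upgrade $u$ to $C^{2,\alpha}$. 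Your route instead stays in the H\"older pair $(X,Y)$ throughout: you write $L_{R_*}=(-\Delta+1)+T$ with $(-\Delta+1):X\to Y$ an isomorphism and $T$ compact (via the compact embedding $C^{2,\alpha}\hookrightarrow C^{0,\alpha}$), conclude $L_{R_*}$ is Fredholm of index zero, and match the codimension-one range with the hyperplane $\{\int P\chi=0\}$ by $L^2$-symmetry. Both arguments are standard; yours avoids the regularity bootstrap at the price of invoking the Fredholm machinery, while the paper's is more elementary but requires passing through $H^2$/$L^2$ and then climbing back up to H\"older regularity.
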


For the proof of Theorem \ref{secondorder} we also need the explicit expression for the second derivative of $\Phi$ with respect to $\chi$ at $(R,0)$. This is conveniently stated in terms of the Legendre polynomials
$$
P_2(t) = (3t^2-1)/2
\qquad\text{and}\qquad
P_4(t) = (35 t^4-30t^2+3)/8 \,.
$$
Note that $P(\omega) = P_2(\omega_3)$ and $Q(\omega)=R_*^{-1} ( (6^3/(17\cdot 35)) P_4(\omega_3)- 2/15)$.

\begin{proposition}\label{secondderivative}
One has
\begin{equation}
\label{eq:secondderivative}
\frac12 D_{\chi\chi}^2\Phi(R_*,0)[P,P](\omega) = - 12 R_*^{-1} \left( \frac{12}{35} P_4(\omega_3) + \frac{1}{7} P_2(\omega_3) + \frac{1}{5} \right).
\end{equation}
\end{proposition}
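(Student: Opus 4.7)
My approach is direct Taylor expansion to second order. Since $\Phi(R,\chi)=R^2(F(R+\chi)-F(R))$, the claim is equivalent to showing that
$$R_*^2 \cdot \bigl[F(R_*+tP)\bigr]_{t^2} = -12R_*^{-1}\bigl(\tfrac{12}{35}P_4(\omega_3)+\tfrac{1}{7}P_2(\omega_3)+\tfrac{1}{5}\bigr),$$
where $[\cdot]_{t^2}$ denotes the coefficient of $t^2$ in the Taylor expansion at $t=0$. I would expand each of the four summands of $F$ separately and group the result in the basis $\{1,P_2(\omega_3),P_4(\omega_3)\}$, using throughout that $P=P_2(\omega_3)$ satisfies $\Delta P=-6P$ and the polynomial identities
$$P_2(t)^2 = \tfrac{18}{35}P_4(t)+\tfrac{2}{7}P_2(t)+\tfrac{1}{5},\qquad (1-t^2)P_2'(t)^2 = -\tfrac{72}{35}P_4(t)+\tfrac{6}{7}P_2(t)+\tfrac{6}{5},$$
the second of which, evaluated at $t=\omega_3$, equals $(\nabla_{\Sph^2} P)^2$.

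\textbf{Geometric terms.} Setting $\sigma:=\sqrt{\phi^2+(\nabla\phi)^2}$ with $\phi=R_*+tP$, an algebraic expansion yields $\sigma = R_* + tP + \tfrac{t^2}{2R_*}(\nabla P)^2 + O(t^3)$. Substituting this into $-\nabla\cdot\frac{\nabla\phi}{\phi\sigma}$, $\frac{3}{\sigma}$, and $-\frac{\sigma}{\phi^2}$ and collecting $t^2$-coefficients produces the geometric contribution
$$R_*^{-3}\Delta(P^2)+2R_*^{-3}\bigl(P^2-(\nabla P)^2\bigr),$$
which, via $\Delta P_\ell(\omega_3)=-\ell(\ell+1)P_\ell(\omega_3)$ and the product identities above, becomes an explicit linear combination of $1, P_2(\omega_3), P_4(\omega_3)$.

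\textbf{Coulomb term.} This is the main obstacle, because $V_0(x):=\int_{B_{R_*}}|x-y|^{-1}\,dy$ is only $C^{1,\alpha}$ across $\partial B_{R_*}$, so $V_\phi(\phi(\omega)\omega)$ is not obviously twice Taylor-expandable in $t$ via a single Hessian of $V_0$. To sidestep this I would rescale the radial variable, writing
$$\int_{\Omega_\phi}\frac{dy}{|\phi(\omega)\omega-y|} = \int_{\Sph^2}\phi(\omega')^3\,d\omega' \int_0^1 \frac{s^2\,ds}{\sqrt{\phi(\omega)^2+\phi(\omega')^2 s^2-2s\phi(\omega)\phi(\omega')(\omega\cdot\omega')}}.$$
In this form the integrand is real-analytic in $(\phi(\omega),\phi(\omega'))$ off the measure-zero set $\{\omega'=\omega,\,s=1\}$, and its only singularity there is the integrable three-dimensional Newton-kernel singularity, so it is legitimate to differentiate twice in $t$ under the integral sign at $\phi\equiv R_*$. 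Integration by parts in $s$ reduces the surviving angular expressions to convolutions against $|\omega-\omega'|^{-1}$, which diagonalize on zonal harmonics via
$$\int_{\Sph^2}\frac{P_\ell(\omega\cdot\omega')}{|\omega-\omega'|}\,d\omega' = \frac{4\pi}{2\ell+1}P_\ell(\omega\cdot e_3),$$
as already used in the proof of Proposition \ref{linearization}. Expanding $P^2$ and $(\nabla P)^2$ in the Legendre basis renders the Coulomb $t^2$-contribution explicit.

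\textbf{Assembly.} Summing the geometric and Coulomb $t^2$-coefficients, multiplying by $R_*^2$, and grouping by Legendre degree produces \eqref{eq:secondderivative}. A useful sanity check along the way is that the first-variation analogue must reproduce Proposition \ref{linearization}; in particular, the Coulomb first-variation contribution must equal $-\tfrac{8\pi R_*}{15}P$, which fixes normalizations before the second-order step is performed.
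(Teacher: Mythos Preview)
Your plan is correct and mirrors the paper's proof: it too splits $F=F_P+F_C$, Taylor-expands each piece to second order (Lemmas~\ref{expper2} and~\ref{expcou2}), and then evaluates at $u=P$ using $-\Delta P=6P$, the Funk--Hecke eigenvalue $4\pi/(2\ell+1)$, and the product formula $P_2^2=\tfrac{18}{35}P_4+\tfrac{2}{7}P_2+\tfrac{1}{5}$. The only cosmetic difference is in the Coulomb step, where the paper rescales by $\phi(\omega)$ (so the integrand $s^2/|\omega-s\omega'|$ is $t$-independent and only the upper limit $\phi(\omega')/\phi(\omega)$ carries the perturbation), whereas you rescale by $\phi(\omega')$; the paper's choice avoids your integration-by-parts step and yields the terms $\int_{\Sph^2}|{\omega-\omega'}|^{-1}u(\omega')\,d\omega'$ and $\int_{\Sph^2}|{\omega-\omega'}|^{-1}u(\omega')^2\,d\omega'$ directly.
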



\subsection{Proof of Theorems \ref{main} and \ref{secondorder} and Corollaries \ref{volume} and \ref{eigenvalue}}

We now show how the ingredients from the previous subsection imply our main results.

\begin{proof}[Proof of Theorem \ref{main}]
We will deduce Theorem \ref{main} from the Crandall--Rabinowitz theorem \cite[Theorems 1.7 and 1.18]{CrRa} applied to $\Phi$, considered as a map from $\mathcal O\cap ((0,\infty)\times X)$ to $Y$. The assumptions of that theorem are satisfied by Propositions \ref{smooth}, \ref{linearization} and \ref{asscr}. As the complement of $\ker L_R$ we choose $\{ \chi\in X :\ \int_{\Sph^2} P(\omega) \chi(\omega)\,d\omega = 0 \}$.
\end{proof}

\begin{proof}[Proof of Theorem \ref{secondorder}]
We denote
$$
\alpha = \frac{d}{ds}\Big|_{s=0} R_s \,,
\qquad
\tilde Q = \frac12 \frac{d^2}{ds^2}\Big|_{s=0} \chi_s
$$
and want to show that $\alpha =-1/7$ and $\tilde Q=Q$.
It follows from \cite[Theorem 1.18]{CrRa} (with $n=1$) that
\begin{equation}
\label{eq:equationsecondorder}
\frac{1}{2} D_{\chi \chi}^2\Phi(R_*,0)[P,P] + D_\chi\Phi(0,R_*)[\tilde Q] + \alpha D_RD_\chi\Phi(R_*,0)[P] = 0 \,.
\end{equation}
(Indeed, this follows by differentiating the equation $f(s,R_s,s^{-1}\chi_s-P)=0$ at $s=0$, where $f(s,R,u)=s^{-1}\Phi(R,s(P+u))$ for $s\neq 0$ and $f(0,R,u)=D_\chi\Phi(R,0)[P+u]$.) Inserting \eqref{eq:LRPderivative} and \eqref{eq:secondderivative} into \eqref{eq:equationsecondorder} we obtain
\begin{equation}
\label{eq:equationsecondorder2}
- 12 R_*^{-1} \left( \frac{12}{35} P_4(\omega_3) + \frac{1}{7} P_2(\omega_3) + \frac{1}{5} \right)
+ L_{R_*} \tilde Q(\omega) - \alpha 12 R_*^{-1} P(\omega) = 0 \,.
\end{equation}
We multiply this equation by $P$ and integrate over $\Sph^2$. Using the fact that $L_{R_*}$ is self-adjoint in $L^2(\Sph^2)$ with $L_{R_*}P=0$, as well as the fact that
$$
\int_{\Sph^2} P(\omega) P_4(\omega_3)\,d\omega = 2\pi \int_{-1}^1 P_2(t) P_4(t)\,dt = 0
\ \text{ and}\
\int_{\Sph^2} P(\omega)\,d\omega = 2\pi \int_{-1}^1 P_2(t)\,dt = 0 \,,
$$
we obtain $\alpha=-1/7$, as claimed. Thus, \eqref{eq:equationsecondorder2} becomes
\begin{equation*}
- 12 R_*^{-1} \left( \frac{12}{35} P_4(\omega_3) + \frac{1}{5} \right)
+ L_{R_*} \tilde Q(\omega) = 0 \,.
\end{equation*}
Using the fact that $P_4(\omega_3)$ is a spherical harmonic of degree four, that, by Proposition~\ref{linearization}, $L_{R_*}$ is diagonal in the basis of spherical harmonics and that, by \eqref{eq:orthogonality}, $\int_{\Sph^2} P \tilde Q\,d\omega =0$, we infer that $\tilde Q(\omega) = aP_4(\omega_3) + b$ for some $a,b\in\R$. Using the explicit expression for the eigenvalues of $L_{R_*}$ on spherical harmonics of degrees zero and four from Proposition \ref{linearization}, we find
$$
a = \frac{6^3}{17\cdot 35} R_*^{-1}
\qquad\text{and}\qquad
b=- \frac{2}{15} R_*^{-1} \,,
$$
which shows that, indeed, $\tilde Q = Q$.
\end{proof}

\begin{proof}[Proof of Corollary \ref{volume}]
The claimed expansion for the volume follows easily from the expansion \eqref{eq:secondorderr} of $R_s$ and the fact that $\int P\,d\omega =0$. A more detailed expansion appears in \eqref{eq:expvol} below, so here we omit the details.
\end{proof}

\begin{proof}[Proof of Corollary \ref{eigenvalue}]
Behind the proof is a general argument for transcritical bifurcations, which can be found, for instance, in \cite[Section I.7]{Ki}, but we sketch the argument for the sake of completeness.

The operator in question is the restriction of $D_\chi\Phi(R_s,\chi_s)$ to functions depending only on $|\omega_3|$. The fact that for $s$ close to zero this operator has a single eigenvalue $\lambda(s)$ close to zero follows by continuity from the corresponding fact for the operator $L_{R_*}$. The associated eigenfunction can be chosen of the form $P+w(s)$ with $w(0)=0$. Differentiating the equation
$$
D_\chi\Phi(R_s,\chi_s)[P+w(s)]=\lambda(s)(P+w(s))
$$
with respect to $s$ at $s=0$ gives
$$
D_{\chi\chi}^2\Phi(R_*,0)[P,P] + D_\chi\Phi(R_*,0)[w'(0)] + D_RD_\chi\Phi(R_*,0)[P] R'(0) = \lambda'(0)P \,.
$$
We multiply this equation by $P$ and integrate over $\Sph^2$. Since $D_\chi\Phi(R_*,0)$ is self-adjoint in $L^2(\Sph^2)$ and has $P$ in its kernel, the term involving $w'(0)$ disappears. Using \eqref{eq:secondderivative}, \eqref{eq:secondorderr} and \eqref{eq:LRPderivative}, as well as the same orthogonality relations as in the proof of Theorem \ref{secondorder}, we obtain
$$
-24 R_*^{-1} \frac{1}{7} \int_{\Sph^2} P^2\,d\omega 
+ \frac{12}{7} R_*^{-1} \int_{\Sph^2} P^2\,d\omega 
=\lambda'(0) \int_{\Sph^2} P^2\,d\omega \,.
$$
Thus, $\lambda'(0)=-(12/7)R_*^{-1}$, as claimed.
\end{proof}

This completes our overview over the proofs of our main results. To summarize, we have reduced the proofs of Theorems \ref{main} and \ref{secondorder} and of Corollaries \ref{volume} and \ref{eigenvalue} to the proofs of Propositions \ref{smooth}, \ref{linearization}, \ref{asscr} and \ref{secondderivative}. Those of the first three propositions will be given in Sections \ref{sec:bifurc} and that of the latter proposition in Section \ref{sec:second}.

The remaining two results from the previous subsection, namely, Lemma \ref{equivalence} and Theorem \ref{energy}, follow by expanding the energy  functional to first and third order, respectively. Their proofs will be given in Sections \ref{sec:equivalence} and \ref{sec:energy}, respectively.




\section{The equation for equilibrium shapes}\label{sec:equivalence}

\subsection{Geometric preliminaries}\label{sec:prelim}

We begin by collecting formulas which express quantities built on $\Omega_\phi$ more explicitly in terms of $\phi$. We have
\begin{equation}
\label{eq:volumestarshaped}
|\Omega_\phi| = \int_{\Sph^2} \int_0^{\phi(\omega)} r^2\,dr\,d\omega =\frac13 \int_{\Sph^2} \phi(\omega)^3\,d\omega 
\end{equation}
and
\begin{equation}
\label{eq:coulombstarshaped}
D[\Omega_\phi] = \frac12 \int_{\Sph^2} \int_{\Sph^2} \int_0^{\phi(\omega)} \int_0^{\phi(\omega')} \frac{1}{|r\omega - r'\omega'|} r'^2 \,dr'\,r^2\,dr\,d\omega\,d\omega' \,.
\end{equation}
Moreover, if $\phi$ is Lipschitz, then in the parametrization $x=\phi(\omega)\omega$ the surface measure $d\sigma(x)$ on $\partial\Omega_\phi$ is given by 
$$
\phi(\omega)\sqrt{\phi(\omega)^2 + (\nabla\phi(\omega))^2}\,d\omega = d\sigma(x) \,.
$$
In particular,
\begin{equation}
\label{eq:perimeterstarshaped}
\per \Omega_\phi = \int_{\Sph^2} \phi(\omega) \sqrt{\phi(\omega)^2 + (\nabla\phi(\omega))^2}\,d\omega \,.
\end{equation}
Finally, we recall (see, e.g., \cite[Proposition 4.1]{CaFaWe1}) that the outer unit normal to $\partial\Omega_\phi$ at $x=\phi(\omega)\omega$ is
$$
\nu_x = \frac{\phi(\omega)\omega - \nabla\phi(\omega)}{\sqrt{\phi(\omega)^2 + (\nabla\phi(\omega))^2}} \,.
$$
Using these formulas we will rewrite the volume integral in \eqref{eq:nonlinear2} as a surface integral over $\Sph^2$. We also obtain a corresponding expression for $D[\Omega_\phi]$ which, however, will not be used in this paper.

\begin{lemma}\label{potsurf}
Let $\phi$ be a positive Lipschitz function on $\Sph^2$. Then
$$
\int_{\Omega_\phi} \frac{dy}{|\phi(\omega)\omega-y|} = \frac{1}{2} \int_{\Sph^2} \frac{(\phi(\omega')\omega'-\nabla\phi(\omega'))\cdot(\phi(\omega')\omega'-\phi(\omega)\omega)}{|\phi(\omega)\omega-\phi(\omega')\omega'|} \phi(\omega')\,d\omega'
$$
and
\begin{align*}
D[\Omega_\phi] & = - \frac{1}{4} \iint_{\Sph^2\times\Sph^2} \phi(\omega)\phi(\omega') (\phi(\omega)\omega-\nabla\phi(\omega))\cdot(\phi(\omega')\omega'-\nabla\phi(\omega')) \\
& \qquad\qquad\qquad\times |\phi(\omega)\omega-\phi(\omega')\omega'| \,d\omega \,d\omega'\,.
\end{align*}
\end{lemma}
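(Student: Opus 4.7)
The plan is to establish both formulas via the divergence theorem, anchored on the identity
\[
\frac{1}{|x-y|} = \tfrac12 \operatorname{div}_y \frac{y-x}{|y-x|},
\]
valid in $\R^3\setminus\{x\}$ (a restatement of $\Delta_y|y-x| = 2/|y-x|$). Combining the two formulas from Section \ref{sec:prelim} for $\nu_x$ and $d\sigma(x)$ produces the convenient identity
\[
\nu_x\, d\sigma(x) = \phi(\omega)\bigl(\phi(\omega)\omega - \nabla\phi(\omega)\bigr)\, d\omega,
\]
which is the workhorse for converting surface integrals on $\partial\Omega_\phi$ into integrals of $\phi$ on $\Sph^2$.

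For the first identity I would apply the divergence theorem to $V(y) := (y-x)/|y-x|$ on $\Omega_\phi$, excising a ball of radius $\epsilon$ around $x$ and letting $\epsilon\to 0$. Since $|V|\equiv 1$, the contribution from the small interior sphere is $O(\epsilon^2)$, whereas $\operatorname{div}_y V = 2/|y-x|$ is in $L^1_\loc$, so the remaining terms pass to the limit and yield
\[
\int_{\Omega_\phi} \frac{dy}{|x-y|} = \tfrac12 \int_{\partial\Omega_\phi} \frac{(y-x)\cdot\nu_y}{|y-x|}\, d\sigma(y).
\]
Substituting $x = \phi(\omega)\omega$, $y = \phi(\omega')\omega'$ and inserting the parametric formula for $\nu_y\,d\sigma(y)$ gives the first claim verbatim.

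For the second identity I would start from $D[\Omega_\phi] = \tfrac12 \iint_{\Omega_\phi\times\Omega_\phi} \tfrac{dx\,dy}{|x-y|}$, apply Fubini (legitimate since $1/|x-y|$ is integrable on the product of bounded sets in $\R^3$), and use the first identity to rewrite the inner integral, obtaining
\[
D[\Omega_\phi] = \tfrac14 \int_{\partial\Omega_\phi} \int_{\Omega_\phi} \frac{(y-x)\cdot \nu_y}{|y-x|}\, dx\, d\sigma(y).
\]
The inner integral I would treat via the dual identity $(y-x)\cdot v/|y-x| = \operatorname{div}_x(-|x-y|\,v)$, valid for any vector $v$ held fixed in $x$. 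Taking $v=\nu_y$ (which depends only on $y$) and applying the divergence theorem in $x$---no excision is needed now because $|x-y|$ is continuous in $x$---gives
\[
\int_{\Omega_\phi} \frac{(y-x)\cdot \nu_y}{|y-x|}\, dx = -\int_{\partial\Omega_\phi} |x-y|\, \nu_x\cdot\nu_y\, d\sigma(x).
\]
Plugging this back in and replacing both $\nu\,d\sigma$ factors by the parametric expression produces the asserted double surface integral. The only genuine technical point is the justification of the first divergence identity at the singularity $y=x$, which is entirely standard; the rest is bookkeeping with the change of variables $x=\phi(\omega)\omega$ on $\partial\Omega_\phi$.
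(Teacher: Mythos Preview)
Your proposal is correct and follows essentially the same route as the paper: apply the divergence theorem to $(y-x)/|y-x|$ to convert the volume integral to a boundary integral, then for $D[\Omega_\phi]$ use $\nabla_x|x-y|=(x-y)/|x-y|$ and a second application of the divergence theorem, and finally insert the parametric expressions for $\nu\,d\sigma$. The only difference is cosmetic---you spell out the $\epsilon$-ball excision at the singularity, while the paper invokes the divergence theorem directly.
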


\begin{proof}
Since $\nabla\cdot (x/|x|) = 2/|x|$ on $\R^3$, we have for any Lipschitz $\Omega\subset\R^3$
$$
\int_\Omega \frac{dy}{|x-y|} = \frac{1}{2} \int_\Omega \nabla_y\cdot \frac{y-x}{|y-x|}\,dy = \frac{1}{2} \int_{\partial\Omega} \nu_y \cdot \frac{y-x}{|y-x|}\,d\sigma(y) \,.
$$
Thus, using $\nabla |x|= x/|x|$ on $\R^3$,
\begin{align*}
D[\Omega] & = \frac12 \int_\Omega \left( \int_\Omega \frac{dy}{|x-y|} \right)dx
= \frac14 \int_{\partial\Omega} \nu_y\cdot \left( \int_\Omega \frac{y-x}{|y-x|}\,dx \right)d\sigma(y) \\
& = - \frac14 \!\int_{\partial\Omega} \nu_y\cdot \!\left( \!\int_\Omega \nabla_x |x-y|\,dx \!\right)d\sigma(y) = - \frac14 \!\int_{\partial\Omega} \nu_y\cdot \!\left( \!\int_{\partial\Omega} \nu_x |x-y|\,d\sigma(x) \!\right)d\sigma(y).
\end{align*}
Inserting in these two formulas, for $\Omega=\Omega_\phi$, the above expressions for $\nu_y$ and $d\sigma(y)$ we obtain the claimed formulas in the lemma.
\end{proof}


\subsection{Derivation of the equation}

We are now in position to give the

\begin{proof}[Proof of Lemma \ref{equivalence}]
We have
$$
\mathcal I[ |\Omega_{\phi}|^{1/3} |\Omega_{\phi+tu}|^{-1/3} \Omega_{\phi+tu}] = \left( \frac{|\Omega_{\phi}|}{|\Omega_{\phi+tu}|}\right)^{2/3} \per\Omega_{\phi+tu} + \left( \frac{|\Omega_{\phi}| }{|\Omega_{\phi+tu}|}\right)^{5/3} D[\Omega_{\phi+tu}] \,.
$$
By straightforward expansions, using \eqref{eq:volumestarshaped} and \eqref{eq:perimeterstarshaped}, we find
$$
|\Omega_{\phi+tu}| = \frac{1}{3} \int_{\Sph^2} (\phi+tu)^3\,d\omega = |\Omega_\phi| + t \int_{\Sph^2} \phi^2 u \,d\omega + o(t) \,.
$$
and
$$
\per \Omega_{\phi+tu} = \per\Omega_\phi + t \int_{\Sph^2} \left( u \sqrt{\phi^2 + (\nabla\phi)^2} + \phi \frac{\phi u + \nabla \phi\cdot\nabla u}{\sqrt{\phi^2 + (\nabla\phi)^2}} \right)d\omega +o(t) \,.
$$
Finally, for the interaction term we have
$$
\int_0^{\phi(\omega')+t u(\omega')} \frac{r'^2\,dr'}{|r\omega-r'\omega'|} = \int_0^{\phi(\omega')} \frac{r'^2\,dr'}{|r\omega-r'\omega'|} +  t u(\omega') \frac{\phi(\omega')^2}{|r\omega - \phi(\omega')\omega'|} + o(t) \,,
$$
so
\begin{align*}
& \int_0^{\phi(\omega)+t u(\omega)} \int_0^{\phi(\omega')+t u(\omega')} \frac{r'^2\,dr'\,r^2\,dr}{|r\omega-r'\omega'|} = \int_0^{\phi(\omega)} \int_0^{\phi(\omega')} \frac{r'^2\,dr'\,r^2\,dr}{|r\omega-r'\omega'|} \\
& \qquad + t u(\omega) \phi(\omega)^2 \int_0^{\phi(\omega')} \frac{r'^2\,dr'}{|\phi(\omega) \omega-r'\omega'|}
+  t u(\omega') \phi(\omega')^2 \int_0^{\phi(\omega)} \frac{r^2\,dr}{|r\omega - \phi(\omega')\omega'|} + o(t) \,.
\end{align*}
Thus, using \eqref{eq:coulombstarshaped},
\begin{align*}
D[\Omega_{\phi+ tu}] & = D[\Omega_\phi] + t \int_{\Sph^2}u(\omega) \phi(\omega)^2 \int_{\Sph^2} \int_0^{\phi(\omega')} \frac{r'^2\,dr'}{|\phi(\omega) \omega-r'\omega'|}\,d\omega'\,d\omega + o(t) \\
& = D[\Omega_\phi] + t \int_{\Sph^2}u(\omega) \phi(\omega)^2 \int_{\Omega_\phi} \frac{dx}{|\phi(\omega) \omega-x|} \,d\omega + o(t) \,.
\end{align*}

Putting everything together, we find
\begin{align*}
\mathcal I[ |\Omega_{\phi}|^{1/3} |\Omega_{\phi+tu}|^{-1/3} \Omega_{\phi+tu}] & = \mathcal I[\Omega_\phi] + t \left( \int_{\Sph^2} \left( u \sqrt{\phi^2 + (\nabla\phi)^2} + \phi \frac{\phi u + \nabla \phi\cdot\nabla u}{\sqrt{\phi^2 + (\nabla\phi)^2}} \right)d\omega \right. \\
& \quad + \int_{\Sph^2}u(\omega) \phi(\omega)^2 \int_{\Omega_\phi} \frac{dx}{|\phi(\omega) \omega-x|} \,d\omega \\
& \quad \left. - \left( \frac23 \per\Omega_\phi + \frac53 D[\Omega_\phi] \right)  |\Omega_\phi|^{-1} \int_{\Sph^2} \phi^2 u \,d\omega \right) + o(t) \,.
\end{align*}
Thus, $\phi$ is a volume-constrained critical point of $\mathcal I$ if and only if
\begin{align*}
& \int_{\Sph^2} \left( u \sqrt{\phi^2 + (\nabla\phi)^2} + \phi \frac{\phi u + \nabla \phi\cdot\nabla u}{\sqrt{\phi^2 + (\nabla\phi)^2}} \right)d\omega + \int_{\Sph^2}u(\omega) \phi(\omega)^2 \int_{\Omega_\phi} \frac{dx}{|\phi(\omega) \omega-x|} \,d\omega \\
& \qquad - \left( \frac23 \per\Omega_\phi + \frac53 D[\Omega_\phi] \right)  |\Omega_\phi|^{-1} \int_{\Sph^2} \phi^2 u \,d\omega =0
\end{align*}
for every Lipschitz $u:\Sph^2\to\R$, that is, if and only if
\begin{equation*}
-\nabla\cdot \frac{\phi \nabla\phi}{\sqrt{\phi^2 + (\nabla\phi)^2}} + \frac{\phi^2}{\sqrt{\phi^2+(\nabla\phi)^2}} + \sqrt{\phi^2+(\nabla\phi)^2} + \phi^2 \int_{\Omega_\phi} \frac{dy}{|\phi(\omega)\omega - y|}  = \mu \phi^2
\quad\text{on}\ \Sph^2
\end{equation*}
with $\mu$ from \eqref{eq:muvirial}. The latter equation is easily seen to be equivalent to \eqref{eq:nonlinear2}.
\end{proof}

The following result, although not necessary for the proof of our main results, clarifies the role of the parameter \eqref{eq:muvirial}. A similar statement with a different proof appears in the proof of \cite[Lemma 2]{Ju1}, see also \cite[Section 3]{Sw}.

\begin{lemma}
If \eqref{eq:nonlinear2} holds with some $\mu\in\R$, then $\mu$ is necessarily given by \eqref{eq:muvirial}.
\end{lemma}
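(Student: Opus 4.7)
\medskip

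\noindent\textbf{Proof plan.} The plan is to use a virial/Pohozaev-type argument by testing equation \eqref{eq:nonlinear2} against the dilation direction $u=\phi$. The reason this works is that this direction corresponds to the rescaling $\Omega_\phi \mapsto (1+t)\Omega_\phi$, under which the three ingredients $\per\Omega_\phi$, $D[\Omega_\phi]$ and $|\Omega_\phi|$ have explicit homogeneity degrees (namely $2$, $5$ and $3$), so the desired identity drops out essentially for free.

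First I would pass from \eqref{eq:nonlinear2} back to its weak/variational form: multiplying by $\phi^2 u$ for an arbitrary Lipschitz $u:\Sph^2\to\R$, integrating over $\Sph^2$, and integrating by parts in the divergence term produces precisely the bracketed identity appearing in the proof of Lemma \ref{equivalence}. By the expansions computed there, this identity is equivalent to saying
$$\frac{d}{dt}\Big|_{t=0}\mathcal{I}[\Omega_{\phi+tu}] \;=\; \mu\,\frac{d}{dt}\Big|_{t=0}|\Omega_{\phi+tu}|$$
holds for every Lipschitz test function $u$. (Note that this is the \emph{unconstrained} Euler--Lagrange identity, not the volume-preserving one; the parameter $\mu$ here plays the role of the Lagrange multiplier, which is exactly what we wish to pin down.)

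Next I would take $u=\phi$, so that $\phi+t\phi=(1+t)\phi$ and $\Omega_{\phi+t\phi}=(1+t)\Omega_\phi$. By the elementary scaling laws
$$\per\bigl((1+t)\Omega_\phi\bigr)=(1+t)^2\per\Omega_\phi,\quad D\bigl[(1+t)\Omega_\phi\bigr]=(1+t)^5 D[\Omega_\phi],\quad |(1+t)\Omega_\phi|=(1+t)^3|\Omega_\phi|,$$
differentiating at $t=0$ gives
$$\frac{d}{dt}\Big|_{t=0}\mathcal{I}[\Omega_{(1+t)\phi}] = 2\per\Omega_\phi + 5D[\Omega_\phi], \qquad \frac{d}{dt}\Big|_{t=0}|\Omega_{(1+t)\phi}| = 3|\Omega_\phi|,$$
so the previous identity collapses to $2\per\Omega_\phi + 5D[\Omega_\phi] = 3\mu|\Omega_\phi|$, which rearranges to \eqref{eq:muvirial}.

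The only point that requires care is the admissibility of $u=\phi$ as a test function in the weak form of \eqref{eq:nonlinear2}. Since $\phi$ is positive and Lipschitz (which must be assumed for the coefficients and integrals in \eqref{eq:nonlinear2} to make sense in the first place), $\phi$ itself is a perfectly legitimate Lipschitz test function on $\Sph^2$, so there is really no obstacle here. In particular, no regularity beyond that already implicit in the hypotheses is needed for the argument to go through.
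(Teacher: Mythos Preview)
Your proof is correct and follows essentially the same starting move as the paper: both test the equation against $u=\phi$ (equivalently, multiply \eqref{eq:nonlinear2} by $\phi^3$ and integrate). The difference lies only in how the Coulomb term is handled. The paper verifies by a direct divergence-theorem computation the identity
\[
\int_{\Sph^2}\phi(\omega)^3\int_{\Omega_\phi}\frac{dy}{|\phi(\omega)\omega-y|}\,d\omega \;=\; 5\,D[\Omega_\phi],
\]
whereas you bypass this by recognizing the whole tested identity as $\frac{d}{dt}\big|_{t=0}\mathcal I[\Omega_{(1+t)\phi}]=\mu\,\frac{d}{dt}\big|_{t=0}|\Omega_{(1+t)\phi}|$ and reading off the coefficients from the homogeneity degrees $2$, $5$, $3$. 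Your route is a bit more conceptual and avoids the auxiliary integral identity; the paper's route has the (minor) advantage of yielding that identity as a byproduct, valid for general sets, not just star-shaped ones.
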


\begin{proof}
The lemma follows by multiplying \eqref{eq:nonlinear2} by $\phi^3$ and integrating over $\Sph^2$ using
\begin{equation}
\label{eq:identity}
\int_{\Sph^2} \phi(\omega)^3 \int_{\Omega_\phi} \frac{dy}{|\phi(\omega)\omega - y|}\,d\omega = 5\, D[\Omega_\phi] \,.
\end{equation}
Let us prove the latter formula. We set
$$
V_\Omega(x) = \int_\Omega \frac{dy}{|x-y|} \,.
$$
Using the formulas for $d\sigma$ and $\nu$ from Subsection \ref{sec:prelim} we write the left side of \eqref{eq:identity} as
$$
\int_{\Sph^2} \phi(\omega)^3 \int_{\Omega_\phi} \frac{dy}{|\phi(\omega)\omega - y|}\,d\omega
= \int_{\partial\Omega_\phi}  \nu_x \cdot x \ V_{\Omega_\phi}(x)\,d\sigma(x) \,.
$$
We now prove that for any (sufficiently regular, but not necessarily star-shaped) set $\Omega\subset\R^3$
$$
\int_{\partial\Omega}  \nu_x \cdot x \ V_{\Omega}(x)\,d\sigma(x) = \frac{5}{2} \int_{\Omega} V_\Omega(x)\,dx \,.
$$
Indeed, by the divergence theorem we have
$$
\int_{\partial\Omega}  \nu_x \cdot x \ V_{\Omega}(x)\,d\sigma(x) = \int_\Omega \nabla\cdot (x V_\Omega(x))\,dx = 3 \int_\Omega V_\Omega(x)\,dx + \int_\Omega x\cdot\nabla V_\Omega(x)\,dx \,.
$$
Thus, the claim will follow provided we can show that
$$
\int_\Omega x\cdot\nabla V_\Omega(x)\,dx = - \frac12 \int_\Omega V_\Omega(x) \,dx \,.
$$
To prove this, we write
\begin{align*}
\int_\Omega x\cdot\nabla V_\Omega(x)\,dx & = - \iint_{\Omega\times\Omega} \frac{x\cdot(x-y)}{|x-y|^3} \,dx\,dy \\
& = - \iint_{\Omega\times\Omega} \frac{dx\,dy}{|x-y|} - \iint_{\Omega\times\Omega} \frac{y\cdot(x-y)}{|x-y|^3} \,dx\,dy \,.
\end{align*}
Renaming $x$ and $y$ we find
$$
\iint_{\Omega\times\Omega} \frac{y\cdot(x-y)}{|x-y|^3} \,dx\,dy = -  \iint_{\Omega\times\Omega} \frac{x\cdot(x-y)}{|x-y|^3} \,dx\,dy = \int_\Omega x\cdot\nabla V_\Omega(x)\,dx
$$
and inserting this into the previous identity, we obtain the claim.
\end{proof}


\section{Existence of a bifurcation}\label{sec:bifurc}

\subsection{Smoothness}
Our goal in this subsection is to prove Proposition \ref{smooth}, namely the smoothness of the map $\Phi:\mathcal O\to C^{0,\alpha}(\Sph^2)$. We will deduce this from bounds in \cite{CaFaWe2}, which deal with a much more singular situation. The observation that these bounds are also useful for more regular interaction kernels is from \cite{Fa}.

\begin{proof}[Proof of Proposition \ref{smooth}]
We split $F(\phi)=F_P(\phi) + F_C(\phi)$ with
$$
F_P(\phi) = -\nabla\cdot \frac{\nabla\phi}{\phi\sqrt{\phi^2+(\nabla\phi)^2}} + \frac{3}{\sqrt{\phi^2+(\nabla\phi)^2}} - \frac{\sqrt{\phi^2+(\nabla\phi)^2}}{\phi^2} 
$$
and
$$
F_C(\phi) = \int_{\Omega_\phi} \frac{dy}{|\phi(\omega)\omega-y|} \,.
$$
Clearly $F_P$ is $C^\infty$ as a map from $\{\phi\in C^{2,\alpha}(\Sph^2):\ \inf_{\Sph^2}\phi>0\}$ to $C^{0,\alpha}(\Sph^2)$. We now show that $F_C$ is $C^\infty$ as a map from $\{\phi\in C^{1,\alpha}(\Sph^2):\ \inf_{\Sph^2}\phi>0\}$ to $C^{0,\alpha}(\Sph^2)$, which will prove the claimed smoothness.

Using $\omega\cdot\nabla\phi(\omega)=0$ for every $\omega\in\Sph^2$ we rewrite the formula from Lemma \ref{potsurf} as
\begin{align}\label{eq:nonlocalrewrite}
- 2 F_C(\phi)(\omega) & = \phi(\omega) \int_{\Sph^2} \frac{\phi(\omega)-\phi(\omega')-(\omega-\omega')\cdot\nabla\phi(\omega')}{|\omega-\omega'|} K(\phi,\omega,\omega')\phi(\omega')\,d\omega' \notag \\
& \quad - \int_{\Sph^2} \frac{(\phi(\omega)-\phi(\omega'))^2}{|\omega-\omega'|} K(\phi,\omega,\omega')\phi(\omega')\,d\omega' \notag \\
& \quad - \frac{\phi(\omega)}{2} \int_{\Sph^2} |\omega-\omega'| K(\phi,\omega,\omega')\phi(\omega')^2\,d\omega 
\end{align}
with
$$
K(\phi,\omega,\omega') = \frac{|\omega-\omega'|}{|\phi(\omega)\omega-\phi(\omega')\omega'|} = \left( \frac{(\phi(\omega)-\phi(\omega'))^2}{|\omega-\omega'|^2} + \phi(\omega)\phi(\omega') \right)^{-1/2}.
$$
The right side of \eqref{eq:nonlocalrewrite} coincides with \cite[(4.17)]{CaFaWe2}, except for the fact that both in the factor $|\omega-\omega'|^{-1}$ and in the definition of $K(\phi,\omega,\omega')$ the exponent $N+\alpha$ in \cite[(4.17)]{CaFaWe2} is replaced by the exponent 1. Since $|\omega-\omega'|^{-1}$ is locally integrable, this both simplifies the proof and strengthens the result. Indeed, in the bound \cite[(4.47)]{CaFaWe2} a loss of $\alpha$ derivatives occurs which, we claim, does not happen in our situation. Once this is shown, the smoothness from $\{\phi\in C^{1,\alpha}(\Sph^2):\ \inf_{\Sph^2}\phi>0\}$ to $C^{0,\alpha}(\Sph^2)$ is shown by following the proof of \cite[Theorem 4.11]{CaFaWe2} line by line.

Thus, we only need to argue that if in the definition of $\mathcal F_1$ in \cite[Lemma 4.9]{CaFaWe2} the exponent $N+\alpha$ is replaced by $1$, then $C^{\beta-\alpha}$ on the left side of \cite[(4.47)]{CaFaWe2} can be replaced by $C^\beta$. We first note that the replacement of the exponents does not change the bounds on $K(\phi,\omega,\omega')$ and its derivatives in \cite[Lemma 4.8]{CaFaWe2}. Moreover, we can coarsen the bound \cite[(4.51)]{CaFaWe2} by estimating the minimum there by a constant times $|\theta_1-\theta_2|^\beta$ uniformly in $\sigma$. Using this bound we obtain \cite[(4.52)]{CaFaWe2} with the last factor on the right side replaced by $|\theta_1-\theta_2|^\beta$, which is already the claimed bound. This concludes the sketch of the proof.
\end{proof}


\subsection{The linearization}
Our goal in this subsection is to prove Propositions \ref{linearization} and~\ref{asscr}.

\begin{lemma}\label{linearizationlemma}
For $R>0$ and $u\in C^2(\Sph^2)$ one has pointwise on $\Sph^2$, as $t\to 0$,
$$
t^{-1} \left( F(R+tu) - F(R)\right) \to  R^{-2}( -\Delta u - 2u) + R \left( \int_{\Sph^2} \frac{u(\omega')}{|\omega-\omega'|}\,d\omega' - \frac{4\pi}{3} u \right).
$$
\end{lemma}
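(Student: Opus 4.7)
The plan is to split $F = F_P + F_C$, where $F_P$ collects the three local (differential/algebraic) terms in the definition of $F$ and $F_C(\phi)(\omega) = \int_{\Omega_\phi}|\phi(\omega)\omega-y|^{-1}\,dy$ is the nonlocal Coulomb term, and to compute the pointwise derivative at $\phi\equiv R$ of each piece separately.

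The $F_P$ part is a direct Taylor expansion. Since $\nabla R = 0$, we have $\nabla(R+tu)=t\nabla u$ and $\sqrt{(R+tu)^2+t^2|\nabla u|^2} = (R+tu)+O(t^2)$ pointwise. Expanding each of the three summands of $F_P(R+tu)$ in powers of $t$, and using $u\in C^2(\Sph^2)$ to take two derivatives in the first summand, I would obtain
\[
F_P(R+tu) = \frac{2}{R} - \frac{t}{R^2}(\Delta u + 2u) + O(t^2)
\]
pointwise on $\Sph^2$, which handles the local contribution.

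For $F_C$, I would set $V_\Omega(x) = \int_\Omega |x-y|^{-1}\,dy$, $x_t = (R+tu(\omega))\omega$, $x_0 = R\omega$, and split
\[
V_{\Omega_{R+tu}}(x_t) - V_{B_R}(x_0) = \bigl[V_{\Omega_{R+tu}}(x_0) - V_{B_R}(x_0)\bigr] + \bigl[V_{\Omega_{R+tu}}(x_t) - V_{\Omega_{R+tu}}(x_0)\bigr].
\]
For the first bracket, passing to polar coordinates gives $\int_{\Sph^2}\int_R^{R+tu(\omega')} r^2 |x_0-r\omega'|^{-1}\,dr\,d\omega'$; dividing by $t$ and invoking dominated convergence (with dominating function $C|\omega-\omega'|^{-1}$, integrable on $\Sph^2$) yields $R\int_{\Sph^2}u(\omega')|\omega-\omega'|^{-1}\,d\omega'$. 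For the second bracket, the representation $\nabla V_\Omega(x) = -\int_\Omega (x-y)/|x-y|^3\,dy$ combined with a further application of dominated convergence (using local integrability of $|x_0-y|^{-2}$ on $\R^3$) shows that $\nabla V_{\Omega_{R+tu}}\to\nabla V_{B_R}$ uniformly on a neighborhood of $x_0$. The classical formulas $V_{B_R}(x) = \frac{4\pi R^3}{3|x|}$ for $|x|\geq R$ and $V_{B_R}(x) = 2\pi R^2 - \frac{2\pi}{3}|x|^2$ for $|x|\leq R$ both give $\nabla V_{B_R}(R\omega)=-\frac{4\pi R}{3}\omega$, so the second bracket divided by $t$ converges to $-\frac{4\pi R}{3}u(\omega)$.

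The main obstacle is the subtlety that $x_0$ and $x_t$ may sit on opposite sides of $\partial\Omega_{R+tu}$ when $u(\omega)\neq 0$, so one cannot differentiate $V_{\Omega_{R+tu}}$ naively using one-sided formulas; however, $V_{B_R}$ is $C^1$ across $\partial B_R$ with matching gradients from both sides, and the dominated-convergence argument above allows smooth passage through the boundary. Combining the $F_P$ limit with the two $F_C$ contributions yields the stated formula.
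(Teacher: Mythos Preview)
Your argument is correct. Your treatment of $F_P$ is exactly the first-order part of the Taylor expansion the paper carries out (to second order) in Lemma~\ref{expper2}. For $F_C$, however, the paper takes a different route: it omits a direct proof of Lemma~\ref{linearizationlemma} and instead reads off the first-order term from the second-order expansion in Lemma~\ref{expcou2}. There the substitution $s=r/\phi(\omega)$ turns $F_C(\phi)(\omega)$ into $\phi(\omega)^2\int_{\Sph^2}\int_0^{1+tg(\omega,\omega')}s^2|\omega-s\omega'|^{-1}\,ds\,d\omega'$, with $g=t^{-1}(\phi(\omega')/\phi(\omega)-1)$, so that both the change of domain and the change of evaluation point are absorbed into a single moving upper limit, which is then Taylor expanded. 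Your decomposition instead separates these two effects: the domain variation at fixed point gives the convolution term $R\int_{\Sph^2}u(\omega')|\omega-\omega'|^{-1}\,d\omega'$, and the point variation at (nearly) fixed domain gives $-\tfrac{4\pi}{3}Ru(\omega)$ via the classical $C^1$-regularity of the Newtonian potential $V_{B_R}$ across $\partial B_R$. This is more transparent geometrically and avoids the change of variables; the paper's scaling trick, on the other hand, extends cleanly to the second-order terms needed later in Proposition~\ref{secondderivative}. One small point worth making explicit in your write-up: the uniform convergence $\nabla V_{\Omega_{R+tu}}\to\nabla V_{B_R}$ near $x_0$ follows from $\int_{\Omega_{R+tu}\triangle B_R}|x-y|^{-2}\,dy\to 0$ uniformly in $x$, which in turn uses that $\int_{|x-y|<\rho}|x-y|^{-2}\,dy=4\pi\rho$ is small independently of $x$ and that the symmetric difference lies in a thin annulus.
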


We omit the proof of this lemma, since we will compute a more precise expansion in Lemmas \ref{expper2} and \ref{expcou2} below. We are now in position to give the

\begin{proof}[Proof of Proposition \ref{linearization}]
Since we have already shown that $\Phi$ is Fr\'echet differentiable, we know that $D_\chi\Phi(R,0)[u]$ coincides with the pointwise limit of $t^{-1} R^{2}(F(R+tu)-F(R))$ as $t\to 0$. Thus, Lemma \ref{linearizationlemma} yields the claimed formula. From this formula it is clear that $D_\chi\Phi(R,0)$ commutes with rotations and therefore is diagonal in the basis of spherical harmonics. Moreover, it is well-known that the eigenvalue of $-\Delta$ on the space of spherical harmonics of degree $\ell\in\N_0$ is $\ell(\ell+1)$. Moreover, by the Funk--Hecke formula, the eigenvalue of the integral operator with integral kernel $|\omega-\omega'|^{-1} = (2(1-\omega\cdot\omega'))^{-1/2}$ on that space is equal to
$$
2\pi \int_{-1}^1 \frac{P_\ell(t)}{\sqrt{2(1-t)}} \,dt \,,
$$
where $P_\ell$ is the $\ell$-th Legendre polynomial. We now use the fact that for $|a|<1$ and $t\in[-1,1]$,
$$
\frac{1}{\sqrt{1-2at + a^2}} = \sum_{\ell=0}^\infty P_\ell(t) a^\ell \,.
$$
We apply this with $a=1-\epsilon$ and, using
$$
\int_{-1}^1 P_\ell(t) P_{\ell'}(t)\,dt = \frac{2}{2\ell+1} \delta_{\ell,\ell'} \,,
$$
obtain
\begin{align*}
\int_{-1}^1 \frac{P_\ell(t)}{\sqrt{2(1-t + \epsilon^2/(2(1-\epsilon)))}} \,dt & = 
\sqrt{1-\epsilon} \int_{-1}^1 \frac{P_\ell(t)}{\sqrt{(1-2(1-\epsilon) t +(1-\epsilon)^2}} \,dt \\
& = \frac{2}{2\ell+1} (1-\epsilon)^{\ell+1/2} \,.
\end{align*}
Using the fact that $P_\ell(t)$ is bounded we obtain by dominated convergence
$$
\int_{-1}^1 \frac{P_\ell(t)}{\sqrt{2(1-t)}} \,dt = \frac{2}{2\ell+1} \,.
$$
This proves the claimed formula for the eigenvalue.
\end{proof}


\begin{proof}[Proof of Proposition \ref{asscr}]
According to the formula for the eigenvalues from Proposition~\ref{linearization}, the kernel of $D_\chi\Phi(R,0)$ is equal to the space of spherical harmonics of degree one and two. The intersection of this space with $X$ is spanned by $P$.

Let us compute the range of $L_{R_*}$. The inclusion $\subset$ in the proposition is easy. To prove the opposite inclusion, let $\chi\in Y$ with $\int P\chi\,d\omega =0$. Then, in particular, $\chi\in L^2(\Sph^2)$. Because of the explicit form of the spectrum we see that the operator $L_{R_*}$ maps $\{ u\in H^2(\Sph^2):\ u \ \text{depends only on}\ |\omega_3|\}$ onto $\{ f\in L^2(\Sph^2):\ f \ \text{depends only on}\ |\omega_3|\}$. Thus, there is a $u\in H^2(\Sph^2)$ depending only on $|\omega_3|$ such that $L_{R_*} u = \chi$. We will show that $u\in C^{2,\alpha}(\Sph^2)$. It is easy to see that
$$
\int_{\Sph^2} \frac{g(\omega')}{|\omega-\omega'|}\,d\omega'
$$
belongs to $C^{0,\alpha}(\Sph^2)$ for any $g\in H^2(\Sph^2)$. (In fact, this is true for much less regular $g$.) By Morrey's embedding theorem, the function $u\in H^2(\Sph^2)$ belongs to $C^{0,\alpha}(\Sph^2)$ (no matter how close $\alpha$ is to $1$). Thus,
$$
-\Delta u = 2 u + R_*^3 \left(  \int_{\Sph^2} \frac{u(\omega')}{|\omega-\omega'|}\,d\omega' - \frac{4\pi}{3} u \right) +\chi \in C^{0,\alpha}(\Sph^2) \,.
$$
By elliptic regularity, $u\in C^{2,\alpha}(\Sph^2)$ and therefore $\chi =L_{R_*} u\in\ran L_{R_*}$, as claimed.

Finally, using the explicit form of the eigenvalues of $L_R$ from Proposition \ref{linearization},
$$
L_R P = \left( 4 - \frac{8\pi}{15} R^3 \right) P \,.
$$
and, therefore,
$$
\frac{d}{dR} L_R P = - \frac{8\pi}{5} R^2 P \,.
$$
From the characterization of $\ran L_{R_*}$ we obtain the last assertion.
\end{proof}


\section{The second derivative}\label{sec:second}

Our goal in this section is to prove Proposition \ref{secondderivative}. To do so, we split $F(\phi)=F_P(\phi) + F_C(\phi)$ as in the proof of Proposition \ref{smooth}. We expand both terms to second order around a constant.

\begin{lemma}\label{expper2}
For $R>0$ and $u\in C^2(\Sph^2)$ one has pointwise on $\Sph^2$, as $t\to 0$,
$$
F_P(R+tu) = \frac{2}{R} + \frac{t}{R^2} \left( -\Delta u - 2u\right) + \frac{2 t^2}{R^3} \left( u\Delta u + u^2 \right) + o(t^2) \,.
$$ 
\end{lemma}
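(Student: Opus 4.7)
The plan is to compute Taylor expansions in $t$ of each of the three pieces making up $F_P(\phi)$ with $\phi=R+tu$, and check that the claimed identity emerges after all contributions are added. The starting point is the expansion of
\[
S:=\sqrt{\phi^{2}+(\nabla\phi)^{2}}=R\sqrt{1+\tfrac{2tu}{R}+\tfrac{t^{2}}{R^{2}}\bigl(u^{2}+(\nabla u)^{2}\bigr)} = R+tu+\frac{t^{2}(\nabla u)^{2}}{2R}+O(t^{3}),
\]
using $\phi^{2}=R^{2}+2Rtu+t^{2}u^{2}$ and $(\nabla\phi)^{2}=t^{2}(\nabla u)^{2}$. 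From this I obtain $S^{-1}$ and $S/\phi^{2}$ by straightforward geometric-series expansions of $(1+\tfrac{tu}{R}+\cdots)^{-1}$ and by multiplying the expansion of $S$ by the expansion of $1/(R+tu)^{2}$. The two algebraic terms $3/S-S/\phi^{2}$ then combine to
\[
\frac{3}{S}-\frac{S}{\phi^{2}} = \frac{2}{R}-\frac{2tu}{R^{2}}+\frac{2t^{2}}{R^{3}}\bigl(u^{2}-(\nabla u)^{2}\bigr)+O(t^{3}).
\]

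For the divergence term, I would expand $(\phi S)^{-1}=R^{-2}(1+\tfrac{2tu}{R}+\cdots)^{-1}$ to obtain $(\phi S)^{-1}=R^{-2}-2tuR^{-3}+O(t^{2})$, so that
\[
\frac{\nabla\phi}{\phi S}=\frac{t\,\nabla u}{R^{2}}-\frac{2t^{2}\,u\nabla u}{R^{3}}+O(t^{3}).
\]
Taking $-\nabla\cdot$, using $\nabla\cdot(u\nabla u)=(\nabla u)^{2}+u\Delta u$, yields
\[
-\nabla\cdot\frac{\nabla\phi}{\phi S}=-\frac{t\Delta u}{R^{2}}+\frac{2t^{2}}{R^{3}}\bigl((\nabla u)^{2}+u\Delta u\bigr)+O(t^{3}).
\]

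Adding the two contributions, the $(\nabla u)^{2}$ terms cancel exactly, leaving
\[
F_{P}(R+tu)=\frac{2}{R}+\frac{t}{R^{2}}(-\Delta u-2u)+\frac{2t^{2}}{R^{3}}(u\Delta u+u^{2})+O(t^{3}),
\]
which is the claim. Modulo book-keeping of the expansion orders, the only thing to be careful about is that the cancellation of the $(\nabla u)^{2}$ terms between the algebraic part and the divergence part takes place, since these are the only places where derivatives of $u$ appear at order $t^{2}$; this is the sole nontrivial observation in what is otherwise a routine second-order Taylor expansion.
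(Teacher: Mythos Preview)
Your proof is correct and follows essentially the same route as the paper's: expand $S=\sqrt{\phi^2+(\nabla\phi)^2}$, then each of the three pieces of $F_P$, and use $\nabla\cdot(u\nabla u)=(\nabla u)^2+u\Delta u$ to see the $(\nabla u)^2$ terms cancel. The only cosmetic differences are that the paper records the expansions of $3/S$ and $S/\phi^2$ separately before combining them, and that it writes the error as $o(t^2)$ rather than your $O(t^3)$; the latter is a harmless strengthening, though one should note that taking the divergence of the remainder requires $u\in C^2$, which is assumed.
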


\begin{proof}
We set $\phi=R+tu$ and compute
$$
\sqrt{\phi^2+(\nabla\phi)^2} = R + t u + \frac{t^2}{2R} (\nabla u)^2 + o(t^2) \,.
$$
Therefore,
\begin{align*}
\frac{3}{\sqrt{\phi^2+(\nabla\phi)^2}}
& = \frac{3}{R} - \frac{3tu}{R^2} + \frac{3t^2}{R^3} \left( -\frac{1}{2} (\nabla u)^2 + u^2 \right) + o(t^2)
\end{align*}
and
\begin{align*}
\frac{\sqrt{\phi^2+(\nabla\phi)^2}}{\phi^2} & = R^{-2} \left( R - t u + \frac{t^2}R \left( \frac12(\nabla u)^2 + u^2 \right) + o(t^2) \right).
\end{align*}
Finally,
\begin{align*}
\frac{1}{\phi\sqrt{\phi^2+(\nabla\phi)^2}}
&= R^{-2} \left( 1 - \frac{2tu}{R} + o(t) \right),
\end{align*}
so
$$
\nabla\cdot \frac{\nabla\phi}{\phi\sqrt{\phi^2+(\nabla\phi)^2}} = \frac{t}{R^2} \nabla\cdot \left( \left( 1 - \frac{2tu}{R} + o(t) \right)\nabla u \right) = \frac{t}{R^2}\Delta u - \frac{2t^2}{R^3} \nabla\cdot \left( u\nabla u \right) + o(t^2) \,.
$$
Collecting all the terms we find
$$
F_P(\phi) = \frac{2}{R} + \frac{t}{R^2} \left( -\Delta u - 2u\right) + \frac{2 t^2}{R^3} \left( \nabla\cdot(u\nabla u) - (\nabla u)^2 + u^2 \right) + o(t^2) \,.
$$ 
Using $\nabla\cdot(u\nabla u) = (\nabla u)^2 + u\Delta u$, we obtain the assertion.
\end{proof}

\begin{lemma}\label{expcou2}
For $R>0$ and $u\in C^{0,\alpha}(\Sph^2)$ for some $0<\alpha<1$ one has pointwise on $\Sph^2$, as $t\to 0$,
\begin{align*}
F_C(R+tu) & = \frac{4\pi}{3} R^2 + tR \left( \int_{\Sph^2} \frac{u(\omega')}{|\omega-\omega'|}\,d\omega' - \frac{4\pi}{3} u(\omega) \right) \\
& \quad + t^2 \left( \frac{\pi}{3} u(\omega)^2 - \frac12 u(\omega) \int_{\Sph^2} \frac{u(\omega')}{|\omega-\omega'|}\,d\omega' + \frac{3}{4} \int_{\Sph^2} \frac{u(\omega')^2}{|\omega-\omega'|}\,d\omega' \right) + o(t^2) \,.
\end{align*}
\end{lemma}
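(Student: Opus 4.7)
My plan is to compute the second-order Taylor expansion of $F_C(R+tu)(\omega)$ at $t=0$ directly from the integral definition, after first removing the $t$-dependence of the integration domain. The substitution $y=\phi(\omega')\rho\omega'$ with $(\rho,\omega')\in[0,1]\times\Sph^2$ recasts the Coulomb integral as
$$
F_C(\phi)(\omega) = \int_{\Sph^2}\int_0^1 \frac{\phi(\omega')^3\rho^2}{|\phi(\omega)\omega - \phi(\omega')\rho\omega'|}\,d\rho\,d\omega',
$$
over a $\phi$-independent domain, with a single integrable singularity at $(\rho,\omega')=(1,\omega)$. I verify that the first two $t$-derivatives of the integrand at $\phi=R+tu$ remain integrable; the key identity is
$$
(\omega-\rho\omega')\cdot(u(\omega)\omega-u(\omega')\rho\omega') = u(\omega)\,|\omega-\rho\omega'|^2 + (u(\omega)-u(\omega'))\,\rho\omega'\cdot(\omega-\rho\omega'),
$$
which, combined with the H\"older regularity of $u$, keeps all derivatives of the kernel dominated by $|\omega-\rho\omega'|^{-1}$ (or milder). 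This legitimizes differentiation under the integral sign. An equally valid and arguably cleaner starting point is the surface-integral representation of Lemma~\ref{potsurf}, whose integrand is manifestly smooth in $t$ with only a bounded singularity at $\omega'=\omega$.

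With this in hand I expand in $t$ and integrate term by term. The zeroth-order value is $R^2\int_{B_1}dy/|\omega-y| = 4\pi R^2/3$. The first-order coefficient, combining the derivatives of $\phi(\omega')^3$ and of the kernel, reduces via $\int_{\Sph^2}d\omega'/|\omega-\omega'|=4\pi$ and $\omega\cdot\int_{B_R}(R\omega-y)/|R\omega-y|^3\,dy = -\omega\cdot\nabla V_{B_R}(R\omega) = 4\pi R/3$ to $R\bigl(\int_{\Sph^2}u(\omega')/|\omega-\omega'|\,d\omega' - (4\pi/3)u(\omega)\bigr)$, agreeing with Proposition~\ref{linearization}. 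For the second-order coefficient I group contributions by whether they involve $u(\omega')^2$, $u(\omega)u(\omega')$, or $u(\omega)^2$, then decouple the $\omega$- and $\omega'$-dependencies using $|u(\omega)\omega-u(\omega')\omega'|^2 = (u(\omega)-u(\omega'))^2 + u(\omega)u(\omega')|\omega-\omega'|^2$ and the cancellation identity above. The residual $\rho$-integrals are elementary one-dimensional integrals in $c=\omega\cdot\omega'$ of the form $\int_0^1 \rho^k (\rho-c)^m (1+\rho^2-2\rho c)^{-(2j+1)/2}\,d\rho$, computable in closed form; the coefficients $3/4$, $-1/2$, and $\pi/3$ then emerge.

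The main obstacle is the correct extraction of the $u(\omega)^2$-coefficient. A tempting shortcut splits $V_{\Omega_{R+tu}}(x_t) = V_{B_R}(x_t) + (V_{\Omega_{R+tu}}-V_{B_R})(x_t)$ with $x_t=(R+tu(\omega))\omega$ and Taylor-expands $V_{B_R}$ pointwise at $R\omega$. This fails because $V_{B_R}$ is only $C^{1,1}$ across $\partial B_R$: its $(\omega\otimes\omega)$-Hessian jumps from $-4\pi/3$ (inside) to $8\pi/3$ (outside), and the two one-sided Taylor expansions produce $u(\omega)^2$-coefficients $-2\pi/3$ and $4\pi/3$ depending on $\sgn u(\omega)$, neither equal to the correct $\pi/3$. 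The fixed-domain formulation bypasses this pathology because the singularity in $x$ on the moving boundary is replaced by an integrable singularity in $(\rho,\omega')$ alone. A sanity check with $u\equiv c$ constant---where the three pieces of the stated formula sum to $4\pi c^2 t^2/3$, matching the exact expansion $(4\pi/3)(R+tc)^2 = 4\pi R^2/3 + 8\pi Rtc/3 + 4\pi t^2 c^2/3$---confirms the claimed coefficients.
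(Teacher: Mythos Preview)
Your approach is sound and genuinely different from the paper's. Where you substitute $y=\phi(\omega')\rho\omega'$ to obtain a fixed domain $[0,1]\times\Sph^2$ with a $t$-dependent integrand, the paper instead scales by $\phi(\omega)$, writing
\[
F_C(\phi)(\omega)=\phi(\omega)^2\int_{\Sph^2}\int_0^{\phi(\omega')/\phi(\omega)}\frac{s^2\,ds\,d\omega'}{|\omega-s\omega'|}.
\]
This makes the kernel $t$-independent and pushes all of the $t$-dependence into the upper limit $1+tg$ with $g=(u(\omega')-u(\omega))/(R+tu(\omega))$. The second-order remainder is then the explicit integral $t^2g^2\int_0^1\sigma\int_0^1\partial_s f(1+tg\sigma\tau)\,d\tau\,d\sigma$, and a single bound $|\partial_s f(s)|\le C|\omega-\omega'|^{-2}$ together with $g^2\le C|\omega-\omega'|^{2\alpha}$ suffices for dominated convergence. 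Your fixed-domain formulation trades this for differentiating the kernel; the cancellation identity you state is exactly what is needed to extract the Hölder factor $|u(\omega)-u(\omega')|$ from the most singular pieces.

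One inaccuracy to flag: the assertion that the first two $t$-derivatives of the kernel are ``dominated by $|\omega-\rho\omega'|^{-1}$ (or milder)'' is too strong. After your identity, the worst second-derivative term is of the form $(u(\omega)-u(\omega'))^2|\omega-\rho\omega'|^{-3}$, hence controlled by $|\omega-\rho\omega'|^{2\alpha-3}$ near the singularity (using $|\omega-\omega'|\le C|\omega-\rho\omega'|$ for $\rho\ge 1/2$). This is strictly more singular than $|\omega-\rho\omega'|^{-1}$ but still integrable over the three-dimensional patch $(\rho,\omega')$ for any $\alpha>0$, so the argument goes through; you should also check that the dominating function can be chosen uniformly in $t$ near $0$, which follows from $|\phi(\omega)\omega-\phi(\omega')\rho\omega'|^2=\phi(\omega)^2\bigl[(1-\lambda\rho)^2+\lambda\rho|\omega-\omega'|^2\bigr]$ with $\lambda=\phi(\omega')/\phi(\omega)$ close to $1$. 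Your discussion of why the naive expansion of $V_{B_R}$ fails at $\partial B_R$, and the sanity check with constant $u$, are both correct and useful.
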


\begin{proof}
Again we write $\phi=R+t u$. Our starting point is the formula
\begin{align}\label{eq:excou2proof0}
\int_{\Omega_\phi}\frac{dy}{|\phi(\omega)\omega-y|} & = \int_{\Sph^2} \int_0^{\phi(\omega')} \frac{r^2 \,dr \,d\omega'}{|\phi(\omega)\omega -r\omega'|} = \phi(\omega)^2 \int_{\Sph^2} \int_0^{\phi(\omega')/\phi(\omega)} \frac{s^2\,ds\,d\omega'}{|\omega - s\omega'|} \notag \\
& = \phi(\omega)^2 \int_{\Sph^2} \int_0^{1+t g(\omega,\omega')} f(s,\omega,\omega')\,ds\,d\omega' \,,
\end{align}
where
$$
g(\omega,\omega') = t^{-1}\left(\frac{\phi(\omega')}{\phi(\omega)}-1\right) =  \frac{u(\omega')-u(\omega)}{R+tu(\omega)}
\qquad\text{and}\qquad
f(s,\omega,\omega') = \frac{s^2}{|\omega-s\omega'|} \,.
$$
(We do not reflect the $t$-dependence of $g$ in the notation.) Dropping for the moment the $\omega$ and $\omega'$-dependence from the notation as well, we write
\begin{align*}
\int_0^{1+t g} f(s)\,ds & = \int_0^1 f(s)\,ds + t g \int_0^1 f(1+tg\sigma)\,d\sigma \\
& = \int_0^1 f(s)\,ds + t g f(1) + t^2 g^2 \int_0^1 \sigma \int_0^1 \partial_s f(1+tg\sigma\tau)\,d\tau\,d\sigma \,.
\end{align*}
We compute
$$
\partial_s f(s) = \frac{3}{2}\frac{s}{|\omega-s\omega'|} + \frac{1}{2}\frac{s(1-s^2)}{|\omega-s\omega'|^3}
$$
and bound, using $|\omega-s\omega'|^2 = (1-s)^2 + s|\omega-\omega'|^2$,
$$
|\partial_s f(s)| \leq \frac{3}{2}\frac{\sqrt s}{|\omega-\omega'|} + \frac{1}{2} \frac{1+s}{|\omega-\omega|^2} \,.
$$
This implies that for every $\omega,\omega'\in\Sph^2$, as $t\to 0$,
$$
g(\omega,\omega')^2 \int_0^1 \sigma \int_0^1 \partial_s f(1+tg(\omega,\omega')\sigma\tau,\omega,\omega')\,d\tau\,d\sigma
\to \frac34 \frac{(u(\omega')-u(\omega))^2}{R^2 |\omega-\omega'|} \,.
$$
Moreover, if $t\leq R/(2\|u\|_\infty)$, then
$$
\left| g(\omega,\omega')^2 \int_0^1 \sigma \int_0^1 \partial_s f(1+tg(\omega,\omega')\sigma\tau,\omega,\omega')\,d\tau\,d\sigma \right|
\leq C \frac{(u(\omega')-u(\omega))^2}{R^2} \frac{1}{|\omega-\omega'|^2}
$$
with a universal constant $C<\infty$. Since $u\in C^{0,\alpha}(\Sph^2)$ for some $0<\alpha<1$, the right side is integrable in $\omega'$ and therefore dominated convergence implies that
\begin{align}
\label{eq:excou2proof}
\int_{\Sph^2} \int_0^{1+t g(\omega,\omega')} f(s,\omega,\omega')\,ds\,d\omega' & = \int_{\Sph^2} \int_0^1 f(s,\omega,\omega')\,ds\,d\omega' + t \int_{\Sph^2} g(\omega,\omega') f(1,\omega,\omega')\,d\omega \notag \\
& \quad + \frac{t^2}{R^2} \frac{3}{4} \int_{\Sph^2} \frac{(u(\omega')-u(\omega))^2}{|\omega-\omega'|}\,d\omega' +o(t^2) \,.
\end{align}
The leading term in \eqref{eq:excou2proof} is
$$
\int_{\Sph^2} \int_0^{1} f(s,\omega,\omega') \,ds\,d\omega' = \int_B \frac{dy}{|\omega-y|} = \frac{4\pi}{3} \,.
$$
Using
$$
g(\omega,\omega') = \frac{u(\omega')-u(\omega)}{R} - t \frac{u(\omega)(u(\omega')-u(\omega))}{R^2} + o(t)
$$
and
\begin{equation}\label{eq:newton}
\int_{\Sph^2} \frac{d\omega'}{|\omega-\omega'|} = 4\pi \,.
\end{equation}
we obtain for the second term on the right side of \eqref{eq:excou2proof} that
\begin{align*}
t \int_{\Sph^2} g(\omega,\omega') f(1,\omega,\omega')\,d\omega
& = \frac{t}{R} \left( \int_{\Sph^2} \frac{u(\omega')}{|\omega-\omega'|}\,d\omega' - 4\pi u(\omega) \right) \\
& \quad + \frac{t^2}{R^2} \left( - u(\omega) \int_{\Sph^2} \frac{u(\omega')}{|\omega-\omega'|}\,d\omega' + 4\pi u(\omega)^2 \right) + o(t^2) \,.
\end{align*}
Finally, using again \eqref{eq:newton} we rewrite the last term in \eqref{eq:excou2proof} as
$$
\int_{\Sph^2} \frac{(u(\omega')-u(\omega))^2}{|\omega-\omega'|}\,d\omega'
= \int_{\Sph^2} \frac{u(\omega')^2}{|\omega-\omega'|}\,d\omega' - 2 u(\omega) \int_{\Sph^2} \frac{u(\omega')}{|\omega-\omega'|}\,d\omega' + 4\pi u(\omega)^2 \,.
$$
Inserting the expansion of \eqref{eq:excou2proof} into \eqref{eq:excou2proof0} we easily obtain the formula in the lemma.
\end{proof}

We are now in position to give the

\begin{proof}[Proof of Proposition \ref{secondderivative}]
We introduce
$$
\mathcal Q_R[u](\omega) = \frac{2}{R^3} \left( u\Delta u + u^2\right) +  \frac{\pi}{3} u(\omega)^2 - \frac12 u(\omega) \int_{\Sph^2} \frac{u(\omega')}{|\omega-\omega'|}\,d\omega' + \frac{3}{4} \int_{\Sph^2} \frac{u(\omega')^2}{|\omega-\omega'|}\,d\omega' \,.
$$
Then Lemmas \ref{expper2} and \ref{expcou2} imply that
$$
t^{-2} \left( \Phi(R,tu) - t D_\chi\Phi(R,0)[u] \right) = t^{-2} R^2 \left( F(R+tu) - F(R) - t R^{-2} L_R u \right) \to R^2 \mathcal Q_R[u]
$$
pointwise on $\Sph^2$. Since we have already shown that $\Phi$ is twice Fr\'echet differentiable, we conclude that
$$
\frac12 D_{\chi\chi}^2 \Phi(R,0)[u,u] = R^2 \mathcal Q_R[u] \,.
$$
Thus, Proposition \ref{secondderivative} will follow if we can show that
\begin{equation}
\label{eq:secondderproof}
\mathcal Q_{R_*}[P](\omega) = -\frac{8\pi}{5} \left( \frac{12}{35} P_4(\omega_3) + \frac{1}{7} P_2(\omega_3) - \frac{1}{5} \right) \,.
\end{equation}
Since $P$ is a spherical harmonic of degree two, we have, as in the proof of Proposition~\ref{linearization},
\begin{equation*}
-\Delta P = 6 P
\qquad\text{and}\qquad
\int_{\Sph^2} \frac{P(\omega')}{|\omega-\omega'|}\,d\omega' = \frac{4\pi}{5} P(\omega) \,,
\end{equation*}
so
$$
\frac{2}{R_*^3} (P\Delta P + P^2) + \frac{\pi}{3} P^2 -\frac12 P \int_{\Sph^2} \frac{P(\omega')}{|\omega-\omega'|}\,d\omega' = -\frac{7}{5} \pi P^2 \,.
$$
We now use the explicit form of the Legendre polynomials to write
$$
P_2(t)^2 = \frac{9 t^4 - 6 t^2 + 1}4 = \frac{18}{35} P_4(t) + \frac{2}{7} P_2(t) + \frac{1}{5} \,,
$$
so
\begin{align*}
\frac{2}{R_*^3} (P\Delta P + P^2) + \frac{\pi}{3} P^2 -\frac12 P \int_{\Sph^2} \frac{P(\omega')}{|\omega-\omega'|}\,d\omega' = -\frac{7}{5} \pi \left( \frac{18}{35} P_4(\omega_3) + \frac{2}{7} P_2(\omega_3) + \frac{1}{5} \right).
\end{align*}
Using the formula for $P_2^2$ again and as well as the formula for the eigenvalues of the operator with integral kernel $|\omega-\omega'|^{-1}$ from Proposition \ref{linearization}, we also find, recalling that $P_\ell(\omega_3)$ is a spherical harmonic of degree $\ell$,
\begin{align*}
\int_{\Sph^2} \frac{P(\omega)^2}{|\omega-\omega'|}\,d\omega' = \frac{18}{35} \ \frac{4\pi}{9} P_4(\omega_3) + \frac{2}{7} \ \frac{4\pi}{5} P_2(\omega_3) + \frac{1}{5} 4\pi \,.
\end{align*}
Multiplying this formula by $3/4$ and adding it to the previous formula, we obtain \eqref{eq:secondderproof}. This concludes the proof of the proposition.
\end{proof}



\section{Expansion of the energy}\label{sec:energy}

Our goal in this section is to prove Theorem \ref{energy} concerning the difference in energy between $\Omega_{R_s+\chi_s}$ and the ball of the same volume. As a preparation for the proof, in the following two lemmas we compute the perimeter and the Coulomb energy of almost spherical sets up to third order in the deviation from a constant.

\begin{lemma}\label{perimetersecondorder}
As $t\to 0$,
\begin{align}
\label{eq:perimeter}
\per \Omega_{R+t u} & = 4\pi R^2 + 2tR \int_{\Sph^2} u\,d\omega + t^2 \left( \frac12 \int_{\Sph^2} (\nabla u)^2\,d\omega + \int_{\Sph^2} u^2\,d\omega \right) + O(t^4) \,.
\end{align}
This expansion is uniform for $(R,u)$ from bounded sets in $(0,\infty)\times C^{0,1}(\Sph^2)$.
\end{lemma}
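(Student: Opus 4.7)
The plan is to work directly from the explicit formula \eqref{eq:perimeterstarshaped}, namely
$$
\per\Omega_{R+tu} = \int_{\Sph^2} (R+tu)\sqrt{(R+tu)^2 + t^2 |\nabla u|^2}\,d\omega,
$$
and perform a pointwise Taylor expansion of the integrand. The key algebraic observation that makes the computation clean (and that explains why the remainder is $O(t^4)$ rather than $O(t^3)$) is the identity
$$
\phi \sqrt{\phi^2 + |\nabla\phi|^2} = \phi^2 \sqrt{1 + |\nabla\phi|^2/\phi^2}\,.
$$
With $\phi = R+tu$, the ratio $|\nabla\phi|^2/\phi^2 = t^2 |\nabla u|^2/(R+tu)^2$ is of order $t^2$ (uniformly in the parameters as soon as $\phi$ stays bounded away from zero), so expanding $\sqrt{1+x} = 1 + x/2 - x^2/8 + O(x^3)$ yields
$$
\phi\sqrt{\phi^2+|\nabla\phi|^2} = \phi^2 + \tfrac{1}{2}|\nabla\phi|^2 - \tfrac{1}{8} \tfrac{|\nabla\phi|^4}{\phi^2} + O\!\left(\tfrac{|\nabla\phi|^6}{\phi^4}\right).
$$

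The absence of a $t^3$ term is now immediate: $\phi^2 = R^2 + 2Rtu + t^2 u^2$ is a polynomial in $t$ of degree exactly two, while $\tfrac12 |\nabla\phi|^2 = \tfrac{t^2}{2}|\nabla u|^2$ is purely quadratic, and the next correction $|\nabla\phi|^4/\phi^2$ is pointwise $O(t^4)$. Integrating term by term over $\Sph^2$ and using $|\Sph^2| = 4\pi$ produces precisely the claimed expansion \eqref{eq:perimeter}.

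For the uniformity statement, suppose $(R,u)$ ranges over a set on which $R\in [R_0,R_1]$ with $R_0>0$ and $\|u\|_{C^{0,1}(\Sph^2)}\le M$. Then for $|t|\le R_0/(2M)$ one has $\phi\ge R_0/2$ pointwise, so the pointwise estimate $|\nabla\phi|^4/(8\phi^2)\le C t^4$ holds with a constant depending only on $R_0$ and $M$, and similarly for the Taylor remainder (which is actually $O(t^6)$). Integrating these bounds gives the uniform $O(t^4)$ error. There is no real obstacle in the argument; the only point worth emphasizing is the structural one — that expanding in powers of $|\nabla\phi|^2/\phi^2$ rather than in $t$ produces the cancellation of the $t^3$ contribution essentially for free.
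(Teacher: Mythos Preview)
Your proof is correct and follows essentially the same route as the paper: both start from \eqref{eq:perimeterstarshaped} and perform a pointwise Taylor expansion of the integrand $(R+tu)\sqrt{(R+tu)^2+t^2|\nabla u|^2}$, observing that it equals $R^2+2tRu+t^2u^2+\tfrac12 t^2|\nabla u|^2+O(t^4)$, and then integrate. Your rewriting via $\phi^2\sqrt{1+|\nabla\phi|^2/\phi^2}$ is a clean way to see why no $t^3$ term arises, and you spell out the uniformity argument more explicitly than the paper does, but the underlying argument is the same.
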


\begin{proof}
We expand pointwise
$$
(R+tu)\sqrt{(R+tu)^2+t^2(\nabla u)^2} = R^2 + 2tRu + \frac12 t^2(\nabla u)^2 + t^2 u^2 + \mathcal O(t^4) \,.
$$
The assertion follows by integration using \eqref{eq:perimeterstarshaped}.
\end{proof}

\begin{lemma}\label{coulombsecondorder}
As $t\to 0$,
\begin{align}
\label{eq:coulomb}
D[\Omega_{R+t u}] & = \frac{(4\pi)^2}{15} R^5 + \frac{4\pi}{3} R^4 t \int_{\Sph^2} u(\omega)\,d\omega \notag \\
& \quad + \frac12 t^2 R^3 \left( \frac{4\pi}{3} \int_{\Sph^2} u(\omega)^2\,d\omega + \iint_{\Sph^2\times\Sph^2} \frac{u(\omega)u(\omega')}{|\omega-\omega'|}\,d\omega\,d\omega' \right) \notag \\
& \quad + t^3 R^2 \left( - \frac{\pi}{3} \int_{\Sph^2} u(\omega)^3\,d\omega + \frac34 \iint_{\Sph^2\times\Sph^2} \frac{u(\omega)^2u(\omega')}{|\omega-\omega'|}\,d\omega\,d\omega' \right) + O(t^4) \,.
\end{align}
This expansion is uniform for $(R,u)$ from bounded sets in $(0,\infty)\times C^{0,1}(\Sph^2)$.
\end{lemma}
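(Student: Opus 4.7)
The plan is to Taylor-expand the double integral representation
\[
2D[\Omega_{R+tu}] = \iint_{\Sph^2\times\Sph^2} H(R+tu(\omega), R+tu(\omega');\omega,\omega')\,d\omega\,d\omega',
\]
where $H(a,b;\omega,\omega') := \int_0^a\int_0^b r^2 r'^2|r\omega-r'\omega'|^{-1}\,dr\,dr'$, to third order in $t$, in the spirit of the proof of Lemma~\ref{expcou2}. A useful preliminary simplification uses the homogeneity $\Omega_{R+tu}=R\,\Omega_{1+(t/R)u}$ together with $D[\lambda\Omega]=\lambda^5 D[\Omega]$ to reduce matters to $R=1$; the claimed formula for general $R$ then follows by substituting $s=t/R$ and collecting powers of $t$. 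For each fixed $(\omega,\omega')$ with $\omega\neq\omega'$ the function $H(\cdot,\cdot;\omega,\omega')$ is $C^\infty$ near $(1,1)$, so Taylor's theorem with an integral remainder yields
\[
H(1+tu,1+tu') = \sum_{k+j\leq 3}\frac{(tu(\omega))^k(tu(\omega'))^j}{k!\,j!}\,H_{k,j}(\omega,\omega') + r_t(\omega,\omega'),
\]
with $H_{k,j} := \partial_a^k\partial_b^j H(1,1;\omega,\omega')$. We would bound the remainder by $|r_t(\omega,\omega')| \leq C t^4/|\omega-\omega'|$ uniformly for $(R,u)$ in bounded sets of $(0,\infty)\times C^{0,1}(\Sph^2)$, which is integrable on $\Sph^2\times\Sph^2$ and gives the uniform $O(t^4)$ remainder stated in the lemma.

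After integration over $(\omega,\omega')$ and use of the symmetry $H(a,b;\omega,\omega')=H(b,a;\omega',\omega)$ (which, after the relabeling $\omega\leftrightarrow\omega'$, pairs $H_{k,j}$ with $H_{j,k}$), only $H_{1,0}, H_{2,0}, H_{1,1}, H_{2,1}, H_{3,0}$ need to be evaluated. The zeroth, linear and quadratic coefficients are handled exactly as in Lemma~\ref{expcou2}, using $\int_{\Sph^2}|\omega-\omega'|^{-1}d\omega'=4\pi$, the interior Newton potential $V_{B_1}(\omega)=\int_{B_1}|\omega-y|^{-1}\,dy=4\pi/3$, and its gradient $\int_{B_1}(\omega-y)|\omega-y|^{-3}\,dy = (4\pi/3)\omega$. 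For the cubic contribution from $H_{2,1}$, the identity $1-\omega\cdot\omega'=\tfrac{1}{2}|\omega-\omega'|^2$ gives
\[
H_{2,1}(1,1;\omega,\omega') = \partial_a\!\left.\frac{a^2 b^2}{|a\omega-b\omega'|}\right|_{a=b=1} = \frac{2}{|\omega-\omega'|} - \frac{1-\omega\cdot\omega'}{|\omega-\omega'|^3} = \frac{3}{2|\omega-\omega'|},
\]
which, combined with its symmetric partner $H_{1,2}$, yields precisely the $\tfrac{3}{4}\iint u^2 u'|\omega-\omega'|^{-1}$ contribution appearing in the stated formula.

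The hard part will be evaluating $\int_{\Sph^2}H_{3,0}(1,1;\omega,\omega')\,d\omega' = -2\pi$, from which the $-\tfrac{\pi}{3}\int u^3$ coefficient arises. Converting the $\omega'$-integral to one over $y\in B_1$ via $y = r'\omega'$, the integrand is a combination of $|\omega-y|^{-1}$, $(1-\omega\cdot y)|\omega-y|^{-3}$, and $(1-\omega\cdot y)^2|\omega-y|^{-5}$; individually the last two are non-integrable near $y=\omega$, but the combined leading singular behaviour is proportional to $(3\cos^2\theta-1)/|\omega-y|^3$ (where $\theta$ is the angle between $\omega-y$ and $\omega$), whose angular average vanishes and thus renders the combined integral finite. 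We would evaluate it either by an explicit limit over truncated regions (excising a small ball around $y=\omega$ and letting its radius tend to zero) or, more cleanly, by reading the required combination off a one-sided third radial derivative at $r=1$ of the explicit interior expression $V_{B_1}(r\omega) = 2\pi - \tfrac{2\pi}{3}r^2$ valid for $r\leq 1$, which encodes exactly the needed moments of $|\omega-y|^{-k}$ against zonal functions on $B_1$. Assembling all contributions completes the third-order expansion and, together with the remainder bound, yields the lemma.
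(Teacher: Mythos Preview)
Your approach---direct Taylor expansion of $H(1+tu,1+tu';\omega,\omega')$---is different from the paper's, which instead writes
\[
H(a,b)=\tfrac12\bigl(H(a,a)+H(b,b)\bigr)-J(a,b),\qquad J(a,b)=\tfrac12\int_a^b\!\int_a^b F\,dr\,dr',
\]
so that the first part equals $\tfrac12(a^5+b^5)H(1,1)$ (a polynomial in $t$, yielding the $\frac{4\pi}{15}\int(R+tu)^5$ term at once) and only the small-square term $J$ needs to be expanded. Both routes give the same coefficients, and your computations of $H_{1,1}$ and $H_{2,1}$ are correct; the value $\int_{\Sph^2}H_{3,0}\,d\omega'=-2\pi$ is also correct.

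There are, however, two genuine gaps. First, your proposed evaluation of $\int_{\Sph^2}H_{3,0}\,d\omega'$ ``by reading off a one-sided third radial derivative of $V_{B_1}(r\omega)=2\pi-\tfrac{2\pi}{3}r^2$'' does not work: one has $\int_{\Sph^2}H(a,1;\omega,\omega')\,d\omega'=\int_0^a r^2 V_{B_1}(r\omega)\,dr$, whose left and right third $a$-derivatives at $a=1$ are $-4\pi$ and $0$; neither equals $-2\pi$. The point is that $\partial_a^3$ and $\int_{\Sph^2}d\omega'$ do not commute at $a=1$. A clean way to get the right value is to use the scaling substitution $r'=as'$, which gives the closed form
\[
H_{3,0}(\omega,\omega')=12\int_0^1\frac{s'^2}{|\omega-s'\omega'|}\,ds'-\frac{9}{2|\omega-\omega'|},
\]
manifestly integrable in $\omega'$, and then $\int_{\Sph^2}H_{3,0}\,d\omega'=12\cdot\tfrac{4\pi}{3}-\tfrac{9}{2}\cdot 4\pi=-2\pi$.

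Second, the remainder bound $|r_t|\le Ct^4/|\omega-\omega'|$ does not follow from the integral form of Taylor's theorem alone: individual fourth partial derivatives such as $\partial_a^2\partial_b^2 H$ near $(1,1)$ behave like $|\omega-\omega'|^{-3}$, which is not integrable on $\Sph^2\times\Sph^2$, and they are multiplied by $u(\omega)^k u(\omega')^j$ rather than by powers of $u(\omega)-u(\omega')$. To make your approach rigorous you must invoke the homogeneity $H(a,b)=a^5H(1,b/a)$, so that in $\tfrac{d^4}{d\sigma^4}H(1+\sigma tu,1+\sigma tu')$ every occurrence of a $k$-th $b$-derivative of $H(1,b)$ (which is $\lesssim|\omega-\omega'|^{1-k}$ on the relevant segment) is accompanied by $k$ factors of $t(u(\omega')-u(\omega))$; the Lipschitz hypothesis then cancels the singularity. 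This is exactly the mechanism the paper's $I_1/I_2$ decomposition makes transparent: in $J$ the integration region has side $t|u(\omega')-u(\omega)|\le tL|\omega-\omega'|$, so two powers of this length absorb the $|\omega-\omega'|^{-3}$ from the second-order remainder of $F$.
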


The proof uses some ideas from the proof of \cite[Theorem 2.1]{FiFuMaMiMo}, where a similar expansion up to order $t^2$ is obtained.

\begin{proof}
Using \eqref{eq:coulombstarshaped} we write
$$
D[\Omega_{R+t u}] = \frac12 \iint_{\Sph^2\times\Sph^2} \left( I_1(\omega,\omega') - I_2(\omega,\omega') \right) d\omega\,d\omega'
$$
with
\begin{align*}
I_1(\omega,\omega') & =
\frac12 \int_0^{R+t u(\omega)} \int_0^{R+t u(\omega)} F(r,r',\omega,\omega') \,dr\,dr' \\
& \quad + \frac12 \int_0^{R+t u(\omega')} \int_0^{R+t u(\omega')} F(r,r',\omega,\omega') \,dr\,dr'
\end{align*}
and
$$
I_2(\omega,\omega') = \frac12 \int_{R+t u(\omega)}^{R+t u(\omega')} \int_{R+t u(\omega)}^{R+t u(\omega')} F(r,r',\omega,\omega') \,dr\,dr' \,,
$$
where
$$
F(r,r',\omega,\omega') = \frac{r^2\, r'^2}{|r\omega-r'\omega'|} \,.
$$

We begin by discussing the term involving $I_1$. By scaling we have
$$
\int_0^a \int_0^a F(r,r',\omega,\omega') \,dr\,dr' = a^5 \int_0^1 \int_0^1 \frac{s^2\,ds\,s'^2\,ds'}{|s\omega-s'\omega|}
$$
and therefore
\begin{align*}
I_1(\omega,\omega') = \frac{(R+t u(\omega))^5 + (R+t u(\omega'))^5}{2} \int_0^1 \int_0^1 \frac{s^2\,ds\,s'^2\,ds'}{|s\omega-s'\omega|}
\end{align*}
and, by symmetry,
\begin{align*}
\frac12 \iint_{\Sph^2\times\Sph^2} I_1(\omega,\omega') \,d\omega\,d\omega' & = \int_{\Sph^2} (R+t u(\omega))^5 \frac12 \int_{\Sph^2} \int_0^1 \int_0^1 \frac{s^2\,ds\,s'^2\,ds'}{|s\omega-s'\omega|} d\omega' \,d\omega \notag \\ 
& = \frac{4\pi}{15}\int_{\Sph^2} (R+t u(\omega))^5 \,d\omega \notag \\
& = \frac{(4\pi)^2}{15} R^5 + \frac{4\pi}{3} R^4 t \int_{\Sph^2} u(\omega)\,d\omega + \frac{8\pi}{3} R^3 t^2 \int_{\Sph^2} u(\omega)^2\,d\omega \notag \\
& \quad + \frac{8\pi}{3} R^2 t^3 \int_{\Sph^2} u(\omega)^3\,d\omega + O(t^4) \,.
\end{align*}
Here we used the fact that
$$
\int_{\Sph^2} \int_0^1 \frac{s'^2\,ds'}{|s\omega-s'\omega|}\,d\omega'
 = \frac{1}{4\pi} \iint_{\Sph^2\times\Sph^2} \int_0^1 \frac{s'^2\,ds'}{|s\omega-s'\omega|}\,d\omega'\,d\omega
$$
(since the integral on the left is independent of $\omega$) and, consequently,
$$
\frac12 \int_{\Sph^2} \int_0^1 \int_0^1 \frac{s^2\,ds\,s'^2\,ds'}{|s\omega-s'\omega|} d\omega' = \frac{1}{4\pi} D[B] = \frac{4\pi}{15} \,.
$$

We now discuss the term involving $I_2$ and write
$$
I_2(\omega,\omega') = \frac{t^2}{2} \int_{u(\omega)}^{u(\omega')} \int_{u(\omega)}^{u(\omega')} F(R+t\rho,R+t\rho',\omega,\omega')\,d\rho\,d\rho' \,.
$$
One easily proves the pointwise bound
\begin{align*}
& \left| F(1+s\rho,1+s\rho',\omega,\omega') - \frac{1}{|\omega-\omega'|} - s \frac{3}{2}\, \frac{\rho+\rho'}{|\omega-\omega'|} \right| \\
&  = \left| F(1+s\rho,1+s\rho',\omega,\omega')-F(1,1,\omega,\omega')- s(\rho\partial_r F(1,1,\omega,\omega') + \rho'\partial_{r'}F(1,1,\omega,\omega') \right| \\
& \leq C \frac{s^2 (\rho^2+\rho'^2)}{|\omega-\omega'|^3}
\qquad\text{if}\ \max\{|s\rho|,|s\rho'|\}\leq 1/2 \,,
\end{align*}
which implies by scaling that
\begin{align*}
& \left| F(R+t\rho,R+t\rho',\omega,\omega')- \frac{R^3}{|\omega-\omega'|} - t \frac{3}{2}\, \frac{R^2(\rho+\rho')}{|\omega-\omega'|} \right| \leq C R \frac{t^2 (\rho^2+\rho'^2)}{|\omega-\omega'|^3}
\end{align*}
if $\max\{|t\rho|,|t\rho'|\}\leq R/2$. By integration with respect to $\rho$ and $\rho'$ we obtain
\begin{align*}
& \left| I_2(\omega,\omega') - \frac{t^2}{2}R^3 \frac{(u(\omega')-u(\omega))^2}{|\omega-\omega'|} - \frac{3 t^3}{4} \frac{R^2(u(\omega')-u(\omega))^2 (u(\omega')+u(\omega))}{|\omega-\omega'|} \right| \\
& \leq \frac{C}{3} \, t^4 \,\frac{R(u(\omega')-u(\omega')) (u(\omega')^3-u(\omega)^3)}{|\omega-\omega'|^3} \leq C L_u^2 \|u\|_\infty^2 \, t^4 \,\frac{R}{|\omega-\omega'|}
\end{align*}
if $|t| \|u\|_\infty \leq R/2$, where $L_u = \sup_{\omega,\omega'} \frac{|u(\omega)-u(\omega')|}{|\omega-\omega'|}$. By integration with respect to $\omega$ and $\omega'$ we obtain
\begin{align*}
\frac{1}{2} \iint_{\Sph^2\times\Sph^2} I_2(\omega,\omega')\,d\omega\,d\omega' & = \frac{1}{4} t^2 R^3 \iint_{\Sph^2\times\Sph^2} \frac{(u(\omega')-u(\omega))^2}{|\omega-\omega'|} \,d\omega\,d\omega' \\
& \quad + \frac{3}{8} t^3 R^2 \iint_{\Sph^2\times\Sph^2} \frac{(u(\omega')-u(\omega))^2 (u(\omega)+u(\omega'))}{|\omega-\omega'|} \,d\omega\,d\omega' \\
& \quad + O(t^4) \,.
\end{align*}
This expansion is uniform for $(R,u)$ from bounded sets in $(0,\infty)\times C^{0,1}(\Sph^2)$. We write the first term on the right side as
\begin{align*}
\frac12 \iint_{\Sph^2\times\Sph^2} \!\!\frac{(u(\omega')-u(\omega))^2}{|\omega-\omega'|} \,d\omega\,d\omega' & = \int_{\Sph^2} \! u(\omega)^2\! \int_{\Sph^2} \frac{d\omega'}{|\omega-\omega'|}\,d\omega - \iint_{\Sph^2\times\Sph^2}\!\! \frac{u(\omega)u(\omega')}{|\omega-\omega'|}\,d\omega\,d\omega' \\
& = 4\pi \int_{\Sph^2} \! u(\omega)^2 \,d\omega - \iint_{\Sph^2\times\Sph^2} \!\!\frac{u(\omega)u(\omega')}{|\omega-\omega'|}\,d\omega\,d\omega' \,.
\end{align*}
where we used \eqref{eq:newton}. Similarly,
\begin{align*}
& \frac12 \iint_{\Sph^2\times\Sph^2} \!\!\frac{(u(\omega')-u(\omega))^2(u(\omega)+u(\omega'))}{|\omega-\omega'|} \,d\omega\,d\omega' \\
& \quad = \int_{\Sph^2} \! u(\omega)^3\! \int_{\Sph^2} \frac{d\omega'}{|\omega-\omega'|}\,d\omega - \iint_{\Sph^2\times\Sph^2}\!\! \frac{u(\omega)^2 u(\omega')}{|\omega-\omega'|}\,d\omega\,d\omega' \\
& \quad = 4\pi \int_{\Sph^2} \! u(\omega)^3 \,d\omega - \iint_{\Sph^2\times\Sph^2} \!\!\frac{u(\omega)^2 u(\omega')}{|\omega-\omega'|}\,d\omega\,d\omega' \,.
\end{align*}
This shows that
\begin{align*}
\frac{1}{2} \iint_{\Sph^2\times\Sph^2} I_2(\omega,\omega')\,d\omega\,d\omega' & = \frac{1}{2} t^2 R^3 \left( 4\pi \int_{\Sph^2} u(\omega)^2 \,d\omega - \iint_{\Sph^2\times\Sph^2} \frac{u(\omega)u(\omega')}{|\omega-\omega'|}\,d\omega\,d\omega' \right) \\
& \quad + \frac{3}{4} t^3 R^2 \left( 4\pi \int_{\Sph^2} \! u(\omega)^3 \,d\omega - \iint_{\Sph^2\times\Sph^2} \!\!\frac{u(\omega)^2 u(\omega')}{|\omega-\omega'|}\,d\omega\,d\omega' \right) \\
& \quad + O(t^4) \,.
\end{align*}
This completes the proof of \eqref{eq:coulomb}.
\end{proof}

\begin{proof}[Proof of Theorem \ref{energy}]
The fact that the map $s\mapsto\chi_s$ from Theorem \ref{main} is $C^4$ implies that there are functions $Q,T\in C^{2,\alpha}(\Sph^2)$ such that $\chi_s = sP+s^2 Q+s^3 T +\mathcal O(s^4)$ in $C^{2,\alpha}$ as $s\to 0$. We have computed the function $Q$ explicitly in the proof of Theorem \ref{secondorder}, but for the proof of Theorem \ref{energy} the pure existence of this function, as well as that of $T$, suffices. On the other hand, we will use the explicit form \eqref{eq:secondorderr} of the coefficient of $s$ in the expansion of $R_s$.

Using formula \eqref{eq:volumestarshaped} we obtain
\begin{align}\label{eq:expvol}
|\Omega_{R_s+\chi_s}| & = \frac{4\pi}{3} R_s^3 + R_s^2 \int_{\Sph^2} \chi_s\,d\omega + R_s \int_{\Sph^2} \chi_s^2\,d\omega + \frac13 \int_{\Sph^2} \chi_s^3\,d\omega \notag \\
& = \frac{4\pi}{3} R_s^3 + s^2 \left( R_s^2 \int_{\Sph^2} Q\,d\omega + R_s \int_{\Sph^2} P^2\,d\omega \right) \notag \\
& \quad + s^3 \left( R_s^2 \int_{\Sph^2} T\,d\omega + \frac13 \int_{\Sph^2} P^3\,d\omega \right) + \mathcal O(s^4) \,,
\end{align}
where in the last equality we used the facts that 
\begin{equation}
\label{eq:orthogonalityexp}
\int_{\Sph^2} P\,d\omega = 0
\qquad\text{and}\qquad
\int_{\Sph^2} PQ\,d\omega = 0 \,.
\end{equation}
The second relation follows from \eqref{eq:orthogonality}. Thus,
\begin{align*}
\rho_s = \left( \frac{3}{4\pi} |\Omega_{R_s+\chi_s}| \right)^{1/3} & = R_s + s^2 \left(  \frac{1}{4\pi} \int_{\Sph^2} Q\,d\omega + R_s^{-1} \frac{1}{4\pi} \int_{\Sph^2} P^2\,d\omega \right) \\
& \quad + s^3 \left( \frac{1}{4\pi} \int_{\Sph^2} T\,d\omega + \frac{1}{12\pi} R_s^{-2} \int_{\Sph^2} P^3\,d\omega \right) + \mathcal O(s^4) \,.  
\end{align*}
For later purposes we record that this implies $\rho_s=R_s+\mathcal O(s^2)$ and therefore
\begin{align}\label{eq:Rexp}
R_s & = \rho_s - s^2 \left(  \frac{1}{4\pi} \int_{\Sph^2} Q\,d\omega + \rho_s^{-1} \frac{1}{4\pi} \int_{\Sph^2} P^2\,d\omega \right)  \notag \\
& \quad - s^3 \left( \frac{1}{4\pi} \int_{\Sph^2} T\,d\omega + \frac{1}{12\pi} \rho_s^{-2} \int_{\Sph^2} P^3\,d\omega \right) + \mathcal O(s^4) \,.
\end{align}

From Lemma \ref{perimetersecondorder} we obtain
\begin{align}\label{eq:expper}
\per \Omega_{R_s+\chi_s} & = 4\pi R_s^2 + 2 R_s \int_{\Sph^2}\chi_s\,d\omega + \frac{1}{2} \int_{\Sph^2} (\nabla\chi_s)^2\,d\omega + \int_{\Sph^2} \chi_s^2\,d\omega + \mathcal O(s^4) \notag \\
& = 4\pi R_s^2 + s^2 \left( 2R_s \int_{\Sph^2} Q\,d\omega + \frac12 \int_{\Sph^2} (\nabla P)^2\,d\omega  + \int_{\Sph^2} P^2\,d\omega \right) \notag \\
& \quad + s^3 2R_s \int_{\Sph^2} T\,d\omega + \mathcal O(s^4) \,,
\end{align}
where we used \eqref{eq:orthogonalityexp} as well as
$$
\int_{\Sph^2} \nabla P\cdot\nabla Q\,d\omega = 0 \,.
$$
This follows from the second relation in \eqref{eq:orthogonalityexp} since $-\Delta P$ is proportional to $P$. Similarly, from Lemma \ref{coulombsecondorder} we obtain
\begin{align}\label{eq:expcou}
D[\Omega_{R_s+\chi_s}] & = \frac{(4\pi)^2}{15} R_s^5 + \frac{4\pi}{3} R_s^4 \int_{\Sph^2}\chi_s(\omega)\,d\omega \notag \\
& \qquad + \frac12 R_s^3 \left( \frac{4\pi}{3}\int_{\Sph^2} \chi_s(\omega)^2\,d\omega + \iint_{\Sph^2\times\Sph^2} \frac{\chi_s(\omega)\,\chi_s(\omega')}{|\omega-\omega'|}\,d\omega\,d\omega' \right) \notag \\
& \qquad + R_s^2 \left( - \frac{\pi}{3} \int_{\Sph^2} \chi_s(\omega)^3\,d\omega + \frac{3}{4} \iint_{\Sph^2\times\Sph^2} \frac{\chi_s(\omega)^2\chi_s(\omega')}{|\omega-\omega'|}\,d\omega\,d\omega' \right) + \mathcal O(s^4) \notag \\
& = \frac{(4\pi)^2}{15} R_s^5 
+  
s^2 \left( \frac{4\pi}{3} R_s^4 \int_{\Sph^2} Q(\omega)\,d\omega
+ \frac{2\pi}{3} R_s^3 \int_{\Sph^2} P(\omega)^2\,d\omega \right. \notag \\
& \qquad\qquad\qquad\qquad \left. + \frac12 R_s^3 \iint_{\Sph^2\times\Sph^2} \frac{P(\omega)\,P(\omega')}{|\omega-\omega'|} \,d\omega\,d\omega' \right) \notag \\
& \qquad 
+
s^3 \left( \frac{4\pi}{3} R_s^4 \int_{\Sph^2} T(\omega)\,d\omega - \frac{\pi}{3} R_s^2 \int_{\Sph^2} P(\omega)^3\,d\omega \right. \notag \\
& \qquad\qquad\qquad\qquad \left. + \frac{3}{4} R_s^2 \iint_{\Sph^2\times\Sph^2} \frac{P(\omega)^2\, P(\omega')}{|\omega-\omega'|}\,d\omega\,d\omega' \right) + \mathcal O(s^4) 
\,. 
\end{align}
Here we used \eqref{eq:orthogonalityexp} as well as
$$
\iint_{\Sph^2\times\Sph^2} \frac{P(\omega)\,Q(\omega')}{|\omega-\omega'|} \,d\omega\,d\omega' = 0 \,.
$$
This follows from the second relation in \eqref{eq:orthogonalityexp} since $\int P(\omega) |\omega-\omega'|^{-1}\,d\omega'$ is proportional to $P(\omega)$ by the Funk--Hecke formula as in the proof of Proposition \ref{linearization}.

Inserting \eqref{eq:Rexp} into \eqref{eq:expper} and \eqref{eq:expcou} we obtain
\begin{align*}
\per \Omega_{R_s+\chi_s} = 4\pi \rho_s^2
+ s^2 \left( \frac12 \int_{\Sph^2} (\nabla P)^2\,d\omega  - \int_{\Sph^2} P^2\,d\omega \right) - s^3 \frac{2}{3} \rho_s^{-1} \int_{\Sph^2} P^3 \,d\omega + O(s^4)
\end{align*}
and
\begin{align*}
D[\Omega_{R_s+\chi_s}] & = \frac{(4\pi)^2}{15} \rho_s^5
+ s^2 \rho_s^3 \left( - \frac{2\pi}{3} \int_{\Sph^2} P^2\,d\omega + \frac12 \iint_{\Sph^2\times\Sph^2} \frac{P(\omega)\,P(\omega')}{|\omega-\omega'|}\,d\omega\,d\omega' \right) \\
& \quad + s^3 \rho_s^2 \left( - \frac{7\pi}{9} \int_{\Sph^2} P^3\,d\omega + \frac{3}{4} \iint_{\Sph^2\times\Sph^2} \frac{P(\omega)^2\,P(\omega')}{|\omega-\omega'|}\,d\omega\,d\omega' \right) + O(s^4) \,.
\end{align*}
Thus,
\begin{align*}
& \mathcal I[\Omega_{R_s+\chi_s}] - \left( 4\pi \rho_s^2 + \frac{(4\pi)^2}{15} \rho_s^5 \right) \\
& \qquad = s^2 \left( \frac12 \int_{\Sph^2} (\nabla P)^2\,d\omega  - 6 \int_{\Sph^2} P^2\,d\omega + \frac{15}{4\pi}\iint_{\Sph^2\times\Sph^2} \frac{P(\omega)\,P(\omega')}{|\omega-\omega'|}\,d\omega\,d\omega' \right) \\
& \qquad\quad + s^2 \left(\rho_s^3 - \frac{30}{4\pi} \right) \left( - \frac{2\pi}{3} \int_{\Sph^2} P^2\,d\omega + \frac12 \iint_{\Sph^2\times\Sph^2} \frac{P(\omega)\,P(\omega')}{|\omega-\omega'|}\,d\omega\,d\omega' \right) \\
& \qquad\quad + s^3 \left( \!\left(- \frac{2}{3}\rho_s^{-1} - \frac{7\pi}{9}\rho_s^2\right) \int_{\Sph^2} P^3\,d\omega + \frac{3}{4}\rho_s^2  \iint_{\Sph^2\times\Sph^2} \!\!\!\frac{P(\omega)^2\,P(\omega')}{|\omega-\omega'|}\,d\omega\,d\omega' \right) + O(s^4).
\end{align*}
We now use the fact that $P$ is a spherical harmonic of degree $2$ and therefore it is an eigenfunction of $-\Delta$ and of the operator with integral kernel $|\omega-\omega'|^{-1}$ with eigenvalues $6$ and $4\pi/5$, respectively, see the proof of Proposition \ref{linearization}. This implies
$$
\frac12 \int_{\Sph^2} (\nabla P)^2\,d\omega  - 6 \int_{\Sph^2} P^2\,d\omega + \frac{15}{4\pi}\iint_{\Sph^2\times\Sph^2} \frac{P(\omega)\,P(\omega')}{|\omega-\omega'|}\,d\omega\,d\omega' = 0 \,,
$$
as well as
$$
\iint_{\Sph^2\times\Sph^2} \frac{P(\omega)\,P(\omega')}{|\omega-\omega'|}\,d\omega\,d\omega' = \frac{4\pi}{5} \int_{\Sph^2} P^2\,d\omega
$$
and
$$
\iint_{\Sph^2\times\Sph^2} \frac{P(\omega)^2\,P(\omega')}{|\omega-\omega'|}\,d\omega\,d\omega' = \frac{4\pi}{5} \int_{\Sph^2} P^3\,d\omega \,.
$$
Using Corollary \ref{volume} we obtain
$$
\rho_s^3 = R_s^3+\mathcal O(s^2) = \frac{30}{4\pi} - \frac{90}{7\cdot 4\pi} R_*^{-1} s + \mathcal O(s^2)
$$
and therefore
\begin{align*}
\mathcal I[\Omega_{R_s+\chi_s}] - \left( 4\pi \rho_s^2 + \frac{(4\pi)^2}{15} \rho_s^5 \right) & = s^3 R_*^{-1} \left( \frac{6}{7} \int_{\Sph^2} P^2\,d\omega  - 2 \int_{\Sph^2} P^3\,d\omega \right) + O(s^4).
\end{align*}
Finally, we compute
$$
\int_{\Sph^2} P^2\,d\omega = 2\pi \int_{-1}^1 P_2(t)^2 \,dt = \frac{4\pi}{5}
\qquad\text{and}\qquad
\int_{\Sph^2} P^3\,d\omega = 2\pi \int_{-1}^1 P_2(t)^3 \,dt = \frac{8\pi}{35}
$$
and obtain the formula in Theorem \ref{energy}.
\end{proof}



\bibliographystyle{amsalpha}

\end{document}